  \let\oldparagraph\paragraph
  \renewcommand{\paragraph}{
    \@ifstar
      \xxxParagraphStar
      \xxxParagraphNoStar
  }
  \newcommand{\xxxParagraphStar}[1]{\oldparagraph*{#1}\mbox{}}
  \newcommand{\xxxParagraphNoStar}[1]{\oldparagraph{#1}\mbox{}}
  \let\oldsubparagraph\subparagraph
  \renewcommand{\subparagraph}{
    \@ifstar
      \xxxSubParagraphStar
      \xxxSubParagraphNoStar
  }
  \newcommand{\xxxSubParagraphStar}[1]{\oldsubparagraph*{#1}\mbox{}}
  \newcommand{\xxxSubParagraphNoStar}[1]{\oldsubparagraph{#1}\mbox{}}
\patchcmd\longtable{\par}{\if@noskipsec\mbox{}\fi\par}{}{}
\def\maxwidth{\ifdim\Gin@nat@width>\linewidth\linewidth\else\Gin@nat@width\fi}
\def\maxheight{\ifdim\Gin@nat@height>\textheight\textheight\else\Gin@nat@height\fi}
\def\fps@figure{htbp}
  \renewcommand*\contentsname{Table of contents}
  \newcommand\contentsname{Table of contents}
  \renewcommand*\listfigurename{List of Figures}
  \newcommand\listfigurename{List of Figures}
  \renewcommand*\listtablename{List of Tables}
  \newcommand\listtablename{List of Tables}
  \renewcommand*\figurename{Figure}
  \newcommand\figurename{Figure}
  \renewcommand*\tablename{Table}
  \newcommand\tablename{Table}
\newtheorem{proposition}{Proposition}[section]
\newtheorem{theorem}{Theorem}[section]
\newtheorem{lemma}[theorem]{Lemma}
\newenvironment{namedproof}[1]{%
  \par\noindent\textit{#1.}\,%
}{\hfill$\square$\par}
\newcommand{\cR}
{\mathcal{R}}
\newcommand{\bW}
{\overline{W}}
\newcommand{\bP}
{\overline{P}}
\newcommand{\bR}
{\overline{R}}
\newcommand{\bcR}
{\overline{\mathcal{R}}}
\newcommand{\bS}
{\overline{S}}
\newcommand{\bs}
{\overline{s}}
\newcommand{\SR}
{\overline{S}_{\mathcal{R}}}
\newcommand{\sR}
{\overline{s}_{\mathcal{R}}}
\newcommand{\sRc}
{\overline{s}^{'}_{\mathcal{R}}}
\newcommand{\SRc}
{\overline{S}'_{\mathcal{R}}}
\newcommand{\sRl}
{s^{*}_{\cR,l}}
\newcommand{\SRl}
{S^{*}_{\cR,l}}
\newcommand{\PR}
{\overline{P}_{\mathcal{R}}}
\newcommand{\WR}
{\overline{W}_{\mathcal{R}}}
\newcommand{\Tree}{\mathrm{TREE}}
\DeclareRobustCommand{\rchi}{{\mathpalette\irchi\relax}}
\newcommand{\irchi}[2]{\raisebox{\depth}{$#1\chi$}}
\newcommand{\nbracket}[1]{\left( #1 \right)}
\newcommand{\cbracket}[1]{\left\{ #1 \right\}}
\newcommand{\rbracket}[1]{\left[ #1 \right]}
\newcommand{\indep}{\perp\!\!\!\perp}
\newcommand{\anon}{1}
\begin{document}

\def\spacingset#1{\renewcommand{\baselinestretch}%
{#1}\small\normalsize} \spacingset{1}


\if1\anon
{
  \title{\bf Inference with Randomized Regression Trees}
  \author{
    Soham Bakshi\footnote{Equal contribution}\label{myUniqueLabel}
\newcounter{myfootnotecounter}
\setcounter{myfootnotecounter}{\value{footnote}}\\
    Department of Statistics,
    University of Michigan, MI, USA.\\
    and \\
    Yiling Huang\footnotemark[\value{myfootnotecounter}]\\
    Department of Statistics,
    University of Michigan, MI, USA.\\
    and \\
    Snigdha Panigrahi
    \\
    Department of Statistics,
    University of Michigan, MI, USA.\\
    and \\
    Walter Dempsey
    \\
    Department of Biostatistics and Institute for Social Research,\\
    University of Michigan, MI, USA.
  }
  \maketitle
} \fi

\if0\anon
{
  \bigskip
  \bigskip
  \bigskip
  \begin{center}
    {\LARGE\bf Title}
\end{center}
  \medskip
} \fi

\bigskip
\begin{abstract}
Regression trees are a popular machine learning algorithm that fit piecewise constant models by recursively partitioning the predictor space.  This paper focuses on statistical inference for a data-dependent model obtained from a fitted regression tree. We introduce Randomized Regression Trees (RRT), a novel selective inference method that adds independent Gaussian noise to the gain function underlying the splitting rules of classic regression trees.
The RRT method offers several advantages over existing methods. First, added randomization is used to obtain a closed-form pivot while accounting for the data-dependent tree structure. Second, RRT with a small amount of randomization achieves predictive accuracy similar to a model trained on the entire dataset, while also providing significantly more powerful inference than existing selective inference methods, such as data splitting. Third, RRT yields intervals that automatically adapt to the signal strength in the data. Our empirical analyses highlight these advantages of the RRT method and its ability to convert a purely predictive algorithm into a method capable of performing powerful inference in the non-linear tree model.
\end{abstract}

\noindent%
{\it Keywords:} CART, Decision trees, Non-linear regression, Post-selection inference, Randomization, Selective inference
\vfill

\newpage
\spacingset{1.8} 

\section{Introduction}
\label{sec:intro}

Regression trees are a common machine learning algorithm for non-linear regression in which regions of the predictor space~$\mathcal{X}$ are recursively partitioned into smaller regions.
The outcome in these smaller regions is thought to be predicted well by a simple model~\citep{breiman1984classification}. The recursive partition is typically chosen to be a binary tree and splits of the binary tree correspond to half-spaces in the predictor space. 
Each terminal region in the tree represents a cell of the partition, and is accompanied by its own simple model. Regression trees have been widely applied across diverse scientific domains, including environmental health~\citep{Gass2014}, clinical and aging research~\citep{Allore2005}, ecology~\citep{Ndong2021}, and economics~\citep{ChernisSekkel2023}, for modeling complex non-linear relationships involving continuous responses.
Trees offer several advantages. First, predictions are fast and easy to compute. Second, piecewise constant models are good at approximating non-linear behavior. Third, these models are  fairly interpretable as the tree itself carries all the information necessary to tell what variables are important in forming predictions.
Despite these appealing features, classic regression trees largely remain an example of a \emph{pure prediction algorithm}~\citep{Efron2020_PEA}.
Such algorithms go directly for high predictive accuracy while neglecting both parameter estimation and attribution or 
significance of the estimated parameters in the models they fit.
Efron comments on this in his article:\\ 
``\emph{the pure prediction algorithms are a powerful addition to the statistician’s armory, yet substantial further development is needed for their routine scientific applicability}''.

In the context of regression trees, natural attribution questions include estimating the mean parameters in piecewise constant regression models, testing for differences in mean response between sibling terminal regions, and conducting goodness-of-fit tests to determine whether additional splits are necessary.
Since the sequence of greedily chosen splits leading to the tree fit are highly data dependent, na{\"i}ve approaches for inference, such as Wald-type tests and intervals, cannot be used for answering attribution questions in the fitted model.

Selective inference tools afford the ability to answer attribution questions in 
data dependent models.  
One of the simplest selective inference tools is data splitting: divide the data into two independent subsets, and then fit a regression tree to one subset and use the second subset for inference.
As an example, consider sample splitting, where the observations are partitioned into two disjoint subsets, assuming that they are independent and identically distributed.
A more recent variant of data splitting for a normal data vector is the {\it UV method} by \cite{rasines2023splitting}, which belongs to a larger category of data fission methods \citep{leiner2023data}.
A universal limitation of all data splitting approaches is the inherent trade-off between predictive modeling and inference, which can result in an inferior model compared to the na{\"i}ve method if too little data is used for modeling, or it may produce wide intervals if too little data is saved for inference. 
An alternative framework for conducting valid inference in data-dependent models, while using the full set of observations, including the data used during modeling, is conditional selective inference \citep{Lee_2016}.
\cite{neufeld2022tree} introduced such an approach for regression trees called {\it Tree-Values}, which uses the full dataset for both fitting the regression tree and conducting inference on data-dependent parameters.
While this method achieves the nominal coverage rate, like its precursor \cite{Lee_2016}, it produces intervals that can be much wider than those from data splitting approaches.

In this paper, we introduce a conditional selective inference methodology relying on a novel randomization scheme that adds independent Gaussian noise to the gain function underlying the splitting rule of classic regression trees.
This approach replaces greedy split selection with a softmax operation.
Based on this, we refer to our proposal as {\it Randomized Regression Trees (RRT)}. 
We show that by leveraging external randomization, our method addresses the limitations of both data-splitting approaches and the existing Tree-Values method.
The advantages of our approach are summarized below.
\begin{enumerate}[leftmargin=*]
    \item Our randomized procedure does not compromise predictive performance in order to deliver more powerful inference, or vice versa. In particular, we show that adding a small amount of randomization to the gain functions allows us to fit a tree model using almost the entire dataset, effectively mimicking the model fit with na{\"i}ve method. 
    
    Although similar predictive performance can be achieved through data splitting when most of the available data is allocated to training, the splitting approach leaves only a small portion of the data for inference. Whereas, our method is able to utilize the full dataset for inference, resulting in much shorter intervals and therefore more powerful inference than approaches using only the held-out portion not used at the time of model training.
    \item Our intervals, similar to those from conditional methods, automatically adapt to the amount of signal in the data, i.e.,
    our intervals widen or narrow depending on the strength of the effect of selection. Moreover, our empirical analyses demonstrate that even with a small amount of randomization, our method produces significantly shorter intervals compared to the Tree-Values method, which uses the full data for inference but does not incorporate external randomization. This highlights how the careful use of randomization can avoid the numerically unstable and excessively wide intervals that arise without it.
    \item Because of the specific randomization scheme we employ, the pivot underlying inference in the RRT method takes a convenient form: the greedy split selection via softmax is corrected for through a product of probabilities over independent randomization variables. Inference based on this pivot is numerically efficient, as accounting for the tree fit simplifies to evaluating $1$-dimensional integrals with respect to a univariate normal density.
\end{enumerate}

\begin{figure}[h]
    \centering
    \includegraphics{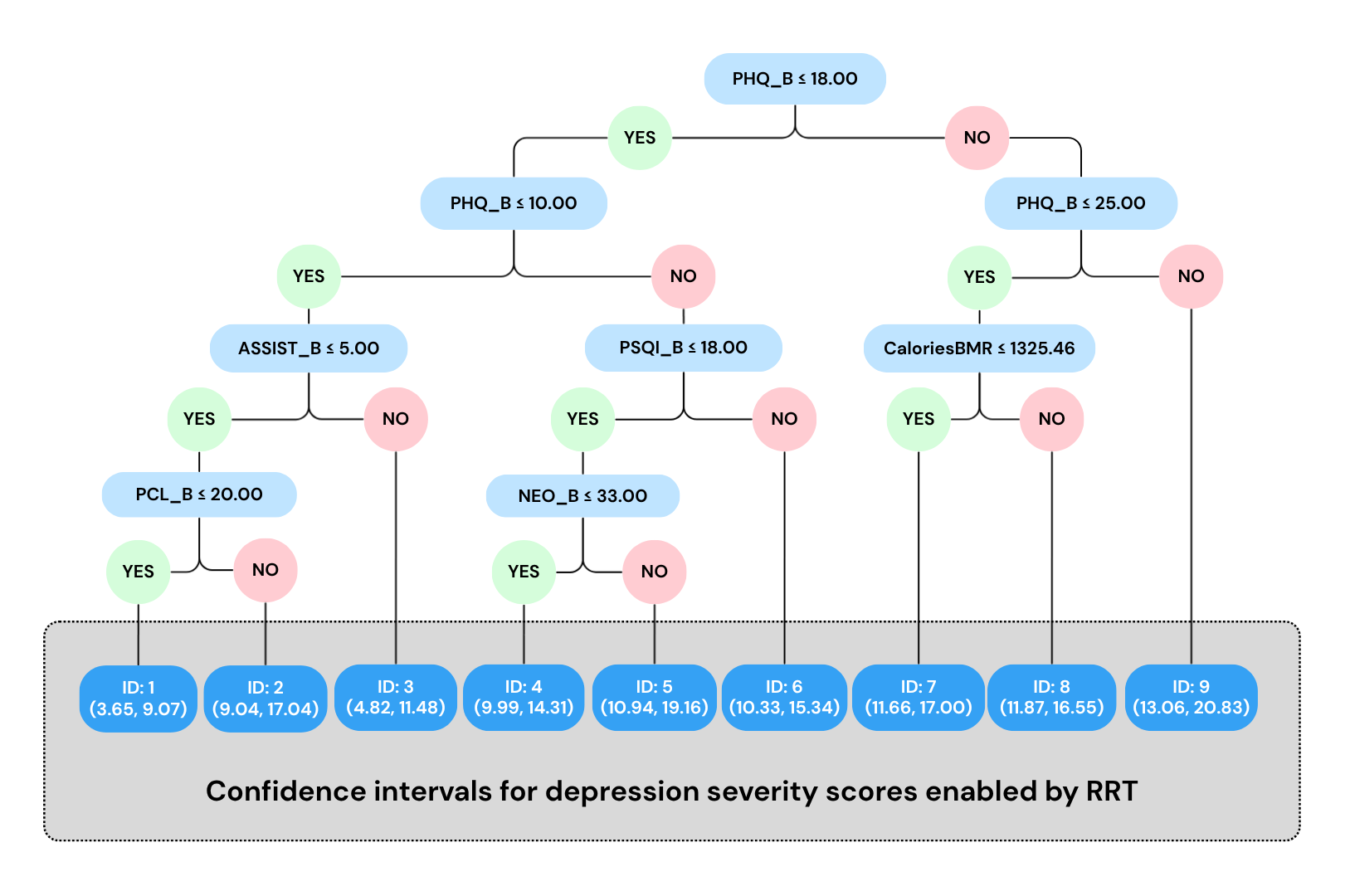}
    \captionsetup{width=\textwidth}
    \caption{Fitted RRT on the full PROMPT training set}
    \label{fig:PROMPT_TREE}
    \vspace{-0.5cm}
\end{figure}

Figure \ref{fig:PROMPT_TREE} presents the tree fit obtained with RRT on a real dataset predicting depression severity in a clinical mental health study. Standard tree-based methods, such as CART, can produce the tree fit but do not provide interval estimates for the population parameters associated with the tree structure.
With RRT, we provide confidence intervals for depression severity scores within the patient subpopulations learned with the tree fit, as illustrated by the intervals for each terminal region in the shaded box. Deferring a more detailed description of this real data analysis and interpretation of the results to Section \ref{sec:case study}, we next illustrate the practical advantages of our method over existing approaches through a worked example.

\section{Worked Example and Other Related Work}
\label{sec:firstexample}

\noindent{\textbf{Example}}. We consider a simple surface plus noise example using data generated in a similar fashion to \cite{neufeld2022tree}.  Let $X \in \mathbb{R}^{n \times p}$ with $n=200, p=5, X_{i j} \stackrel{i.i.d.}{\sim} \mathcal{N}(0,1)$, and $Y \sim \mathcal{N}_n\left(\mu, \sigma^2 I_n\right)$ with $\sigma=2$. The vector $\mu$ with $(\mu)_i=b \times\left[1_{\left(x_{i, 1} \leq 0\right)} \times\left\{1+a 1_{\left(x_{i, 2}>0\right)}+1_{\left(x_{i, 3} \times x_{i, 2}>0\right)}\right\}\right]$ defines a three-level tree where $a=1$, $b=2$ determine the signal strength.  

We consider our proposed RRT method and three baseline methods: (1) na{\"i}ve inference, (2) UV method (or Data Fission for normal data), (3) Tree-Values method.  For each method, the maximum depth of the final tree is 3, the minimum number of samples in a node to be split is 25, the minimum size of terminal nodes is 10, and we do not prune after the stopping criteria is met.  After fitting the trees, confidence intervals for terminal regions are computed and evaluated based on three metrics: (1) coverage rate, (2) average confidence interval length, and (3) test mean squared error (MSE) associated with the tree fit.    

In Figure~\ref{fig:toy_eg_1}, we first compare the two non-randomized inference methods, na{\"i}ve and Tree-Values, to our proposed method  on the same simulated dataset over 500 simulations.  
In our RRT proposal, the randomization sd $\tau$, specified in Section \ref{sec:simulation}, takes $4$ distinct values, corresponding to $4$ levels of randomization: RRT($1$), RRT($2$), $\ldots$, RRT($4$). Among these, RRT($1$) represents the method with the least amount of randomization. We make the following observations:
\begin{itemize}
    \item Both RRT and Tree-Values achieve valid $90\%$ coverage, while na{\"i}ve inference fails to deliver valid coverage. Tree-Values lead to very wide confidence intervals, whereas RRT, at all levels of randomization, produce significantly shorter intervals. On an average, the lengths of our intervals are roughly one-fifth of those produced by Tree-Values, corresponding to a substantial $80\%$ increase in inferential power over this earlier approach.
    \item Finally, in terms of predictive accuracy, RRT($1$) achieves comparable test MSE as the non-randomized methods. This shows that with a small amount of randomization, RRT converts a {\it pure prediction algorithm} into a method that achieves high predictive accuracy while also answering attribution questions in surface plus noise models with high inferential power.
\end{itemize}

\begin{figure}[h]
    \centering
    \includegraphics[width=\linewidth]{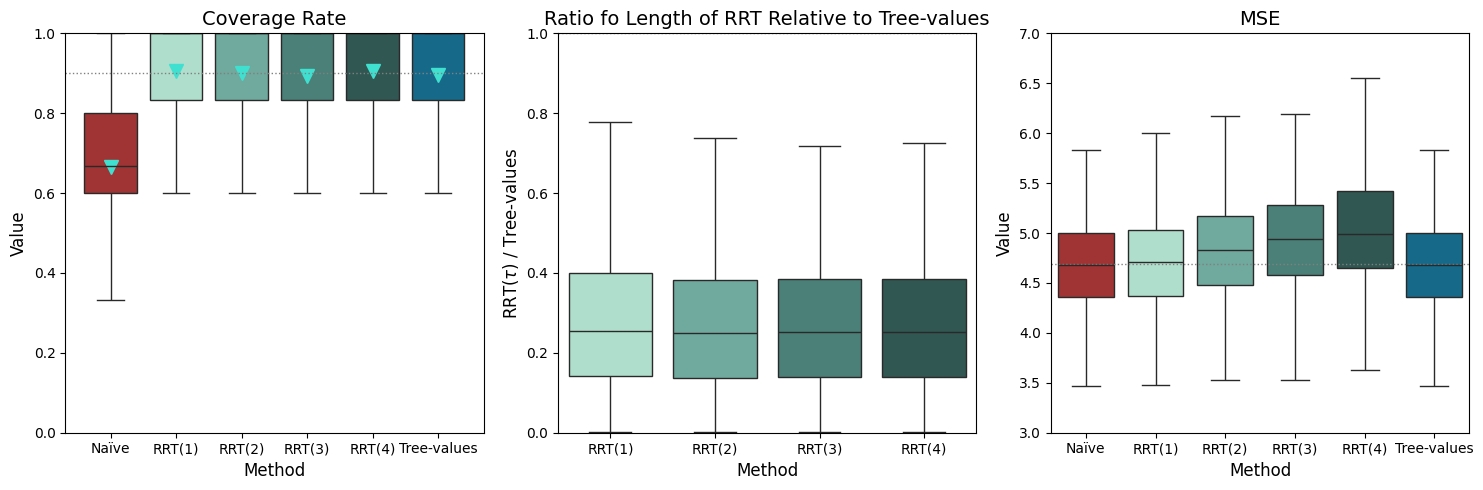}
    \vspace{-0.3cm}
    \captionsetup{width=\textwidth}
    \caption{Coverage rate, average CI length, and prediction MSE of the proposed method, Tree-Values, and Na{\"i}ve; {The dotted line is plotted at 0.9 in the coverage plot, and at the mean value of the proposed method in the MSE plot}; Triangles in the coverage plot indicates the empirical averages of coverage rates}
    \label{fig:toy_eg_1}
    \vspace{-0.5cm}
\end{figure}

We next compare RRT($1$), which achieves similar predictive accuracy as the na{\"i}ve method, to the UV method with $\gamma \in \{0.05, 0.1, 0.2, 0.3, 0.4, 0.5\}$, on the same simulated dataset. 
Figure~\ref{fig:toy_eg_2} presents the results for 500 simulations. Note that:
\begin{itemize}
    \item Both methods achieve $90\%$ nominal coverage. Though, there is a clear trade-off between attribution and predictive accuracy for the UV method.   Remarkably, even UV($0.5$), which sacrifices substantial predictive accuracy for inference, still produces intervals that are twice as long as those of RRT($1$).
    \item On the other hand, UV$(0.05)$, which has predictive accuracy similar to RRT($1$), produces confidence intervals that are nearly five times as wide as the RRT($1$) intervals.
\end{itemize}    

\begin{figure}[h]
    \centering
    \includegraphics[width=\linewidth]{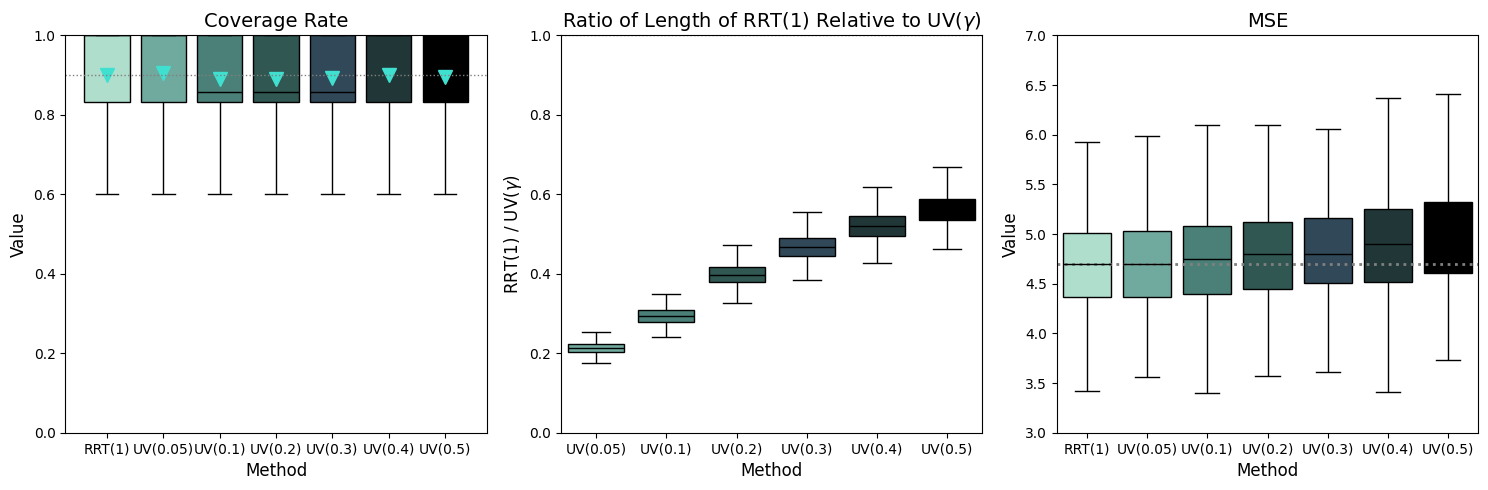}
    \vspace{-0.3cm}
    \captionsetup{width=\textwidth}
    \caption{Coverage rate, average CI length, and prediction MSE of the proposed method and the UV method; The dotted line is plotted at 0.9 in the coverage plot, and at the mean value of ``Na{\"i}ve'' in the MSE plot; Triangles in the coverage plot indicates the empirical averages of coverage rates}
    \label{fig:toy_eg_2}
    \vspace{-0.5cm}
\end{figure}

\noindent{\textbf{Other related work}}. The classification and regression trees (CART) is a decision-tree algorithm that builds a tree-like structure using splitting criteria \cite{breiman1984classification}. A model-based alternative called CTree was introduced by \cite{Hothorn01092006} in which the tree is grown by hypothesis testing.  
However, the CTree approach differs from our work in focus:  CTree focuses on fitting a tree model, whereas our approach addresses inferential questions that arise after a tree model has been fitted using our randomized tree-growing procedure.

In previous inferential work on regression trees, \cite{wager2016adaptiveconcentrationregressiontrees} developed convergence guarantees for the predictive surface of CART trees. 
Similarly, \cite{JMLR:v17:14-168} construct confidence intervals for random forest predictions based on subsampling, using the fact that these predictions can be expressed as U-statistics.
Note, however, that these approaches do not enable interval estimation for parameters associated with the tree fit---the inferential questions we refer to as attribution problems in the introduction.
Moreover, interestingly, our randomization approach employing a softmax operation for choosing splits opens the possibility for developing new random forest methods that do not rely on subsampling. 
\cite{Loh2018-hv} develop bootstrap calibration procedures that attempt to provide confidence intervals for the regions of a regression tree. However, this approach has been empirically shown to fail in providing intervals that achieve nominal coverage.

Attribution problems in regression trees naturally fit within the selective inference framework, as they involve data-dependent parameters rather than parameters specified a priori.
Our selective inference method takes a conditional approach to such problems.
Unlike simultaneous inference approaches that deliver guarantees over many plausible questions of interest \citep{Berk} or hybrid variants of the approach such as \cite{mccloskey2024hybrid}, the conditional approach yields inference for the specific model fitted to the data that practitioners ultimately rely on for interpreting their findings. Furthermore, by focusing on the fitted model observed in the data, the conditional approach---typically involving some randomization, as done in \cite{tian2018selective, Panigrahi02102023}---often yields substantially more powerful inference than simultaneous approaches to the same problem.
For a comparative analysis of these two approaches, we refer interested readers to \cite{perry2024infer}.

The choice of the conditioning event in a conditional approach, however, can lead to different types of inferential guarantees. 
As demonstrated in \cite{goeman2024selection}, a minimal conditioning event is needed to control the false coverage rate (FCR), which is the proportion of miscoverage over all constructed confidence intervals \citep{benjamini2005false}.
In this paper, our choice of conditioning event guarantees valid conditional inference for each mean parameter in the fitted tree model while also controlling the FCR across all fitted parameters, as shown in Section \ref{Sec:mainresults}.

Earlier work in conditional inference, such as \cite{charkhi2018asymptotic, jewell2022testing, guglielmini2025asymptotic, gao2024selective, perry2024inference}, employing no extra randomization, are known to have two main drawbacks. First, they can suffer from extremely low inferential power (i.e., infinitely wide intervals) when the observed data falls near the boundary of the selection event---a phenomenon we observe in our simulation examples with the non-randomized Tree-Values method. Second, explicit descriptions of conditioning events in terms of the data are not always available in closed form, sometimes even requiring computationally intensive Monte Carlo approaches for computing tests and interval estimates. Our randomized method bypasses both these difficulties by replacing the greedy split selection in CART with a simple softmax version through external randomization.
As already shown in Figure \ref{fig:toy_eg_1}, with a small amount of randomization, the RRT method does not trade off predictive power for inferential power, or vice versa, and produces substantially shorter intervals than the non-randomized Tree-Values method. 
Moreover, the additive form of randomization produces a closed-form pivot that is agnostic to the choice of information gain measure in the splitting criterion and enables numerically efficient inference by computing a series of fairly simple, one-dimensional integrals.

Although our work, like previous work in \cite{panigrahi2023approximate, huang2023selective, panigrahi2024exact, huang2023graphselective}, employs external randomization for inference, the RRT applies a different form of randomization than those methods, all which solve a linear regression fit. More specifically, these papers add a linear term involving both the unknown parameter and the randomization to the optimization objective. Whereas, by directly adding noise to the gain functions in the splitting criterion of the tree-growing algorithm, the splits in our tree fit are selected randomly according to probabilities determined by the randomization distribution. Finally, our setup assumes normally distributed data. Though we note that our pivot performs well even when the data deviate from normality. This robustness arises from the smoothness properties of our pivot constructed with randomization, as leveraged in \cite{panigrahi2023carving, bakshi2024selectiveinferencetimevaryingeffect} to perform selective inference without imposing parametric assumptions on the data distribution.
We therefore expect that our novel inferential approach can be naturally extended to provide valid inference for other data types in future work.

\section{RRT algorithm and model fit}
\label{sec: model}

We start with some basic notations. Consider a continuous response $Y\in \mathbb{R}^{n}$ and a set of $p$ predictors measured on $n$ observations, $X= (X^1, \ldots, X^p)$, where $X^{j} \in \mathbb{R}^{n}$ denotes the $j$-th predictor.
For a region $\cR \subseteq \mathbb{R}^{p}$, let $n_{\cR} = \lvert \cbracket{i\in [n]: X_{i} \in \cR}\rvert$ be the number of observations whose predictors fall within $\cR$ and let $\overline{Y}_{\cR}$ be the mean of these observations.

A standard TREE growing algorithm starts with the entire covariate space  $\mathbb{R}^{p}$, which we denote as $P_1$. The algorithm then recursively partitions this space with the goal of maximizing a certain notion of information gain based on the data $Y$.
Many popular tree-growing methods, including the CART, construct these partitions using a series of greedily selected binary splits. 
Splits take the form $s=(j,o)\in \rchi=[p] \times [n-1]$ where $j\in [p]$ denotes the index of the predictor selected for splitting and $o$ denotes the order statistic of this predictor.
For a non-terminal region $P \subseteq \mathbb{R}^{p}$, we let
$\rchi_{P} \subseteq \rchi$ denote the set of all possible splits that can be made on the region $P$.
Each split $s \in \rchi_{P}$ partitions $P$ into two half-spaces, $P_{s}^{l}=\left\{z \in P: (z)_j \leq X^{j}_{(o)}\right\}$, $P_{s}^{r}=\left\{z \in P: (z)_j>X^{j}_{(o)}\right\}$, where $(z)_j$ is the $j$-th coordinate of $z$ and $X^{j}_{(o)}$ is the $o$-th order statistic of the predictor $X^j$.
This split $s$ is associated with a measure of information gain, denoted by $G(Y; P, s)$.
In particular, the CART algorithm selects a split $s$ that maximizes the reduction in mean squared error (MSE) 
$$G(Y; P, s) =  \dfrac{1}{\sqrt{n_P}}\left[\sum_{i:X_{i} \in P} (Y_i - \overline{Y}_{P})^2 - \left\{\sum_{i:X_{i} \in P_{s}^{l}} (Y_i - \overline{Y}_{P_{s}^{l}})^2 + \sum_{i:X_{i} \in P_{s}^{r}} (Y_i - \overline{Y}_{P_{s}^{r}})^2 \right\}\right].$$ 
Equivalently, on each parent region $P$, it selects the optimal split as follows
$$
s^* = \underset{s\in \rchi_P}{\text{argmax}} \; G(Y; P, s)\equiv \underset{s\in \rchi_P}{\text{argmax}} \; \dfrac{1}{\sqrt{n_P}}\left\{n_{P^{l}_{s}}\overline{Y}_{P_{s}^{l}}^{2} + n_{P^{r}_{s}}\overline{Y}_{P^{r}_{s}}^{2}\right\},
$$
where the second equivalence follows by ignoring the constants in the first optimization objective.

\subsection{RRT algorithm}
\label{subsec:rrtbasic}

To grow a randomized regression tree (RRT), we modify the CART algorithm by adding independent normal random variables to the gain function at each split.
Specifically, for a parent region $P$, we maximize a randomized gain function to select the split $s^{*}$ from the set of possible splits $\chi_P$ as:
\begin{equation*}
  s^{*}  = \underset{s\in \chi_{P}}{\text{argmax}} \; G(Y; P, s)  + W(s), 
\end{equation*} 
where the external randomization variables $W(s) \sim \mathcal{N}(0, \tau_{P}^2)$ are independent of $Y$ and of each other, both at the same split and across different splits within the tree algorithm. This is equivalent to replacing the greedy split selection based on maximizing the information gain with a softmax.

Algorithm \ref{alg:RRT} outlines our method for constructing a RRT with a maximum depth $d_{\text{max}}$.
In the special case where $W(s)=0$, i.e.,  no randomization, the CART algorithm is recovered.
We start here with a fixed-depth tree, but in Section~\ref{app:adaptive} of the Supplementary Material we show how the RRT method readily extends to grow trees adaptively and perform inference in the fitted model.


\begin{algorithm}[h]
\caption{Randomized CART: fixed-depth tree growing algorithm}
\label{alg:RRT}
\KwIn{Training data $(X, Y)$, maximum depth $d_{\text{max}}$, noise variance $\tau_P^2$}
\KwOut{Set of terminal regions $\{\cR_1, \cR_2, \ldots, \cR_K\}$ forming the fitted randomized tree}

Initialize root region $P_1 \gets \mathbb{R}^p$, current depth $d \gets 0$\;

\While{$d < d_{\text{max}}$}{
    \ForEach{region $P$ at depth $d$ with $n_P > 1$}{
        Compute all candidate splits $\chi_P = \{(j, o)\}$\;
        Compute the impurity gain $G(Y; P, s)$ for all $s \in \chi_P$\;
        Draw random perturbations $W(s) \sim \mathcal{N}(0, \tau_P^2)$ for all $s \in \chi_P$\;
        Select randomized best split 
        $$s^* \gets \arg\max_{s \in \chi_P} \left\{ G(Y; P, s) + W(s) \right\}\;$$
        Partition region $P$ using $s^*$ into left and right children $\{P^{l}_{s^*}, P^{g}_{s^*}\}$\;
    }
    Increment tree depth: $d \gets d + 1$\;
}
Return all terminal regions $\{P\}$ as $\{\cR_1, \ldots, \cR_K\}$\;
\end{algorithm}

\subsection{TREE-model}
For developing inference, we introduce some more notation. Let $\bS$ be the collection of all splits generated by the RRT Algorithm \ref{alg:RRT}, and  $\bW$ is the collection of all external random variables added to the gain functions to generate these splits. 
The output of the RRT algorithm, denoted as $\Tree(Y, \bW)=\bR = \{R_1, R_2, \ldots, R_M\}$, is the set of terminal regions in the tree, also called leaves.
Furthermore, let $\bP = \cbracket{P_{1}=\mathbb{R}^{p},\ldots,P_{K}}$ denote all the internal or non-terminal regions in this tree output. 

Suppose for our observed realization of the data $Y=y$, the $M$ terminal regions observed as output of the RRT algorithm are $\{R_{1}=\cR_1, R_{2}=\cR_2, \ldots, R_{M}=\cR_M\}.$
Associated with this output, a simple predictive piecewise-constant TREE-model for the RRT is given by: 
\begin{equation}
Y\sim \mathcal{N}(\mu, \sigma^{2} I_n), \text{ where } (\mu)_{i} = \sum_{m=1}^{M} \mu_{\cR_{m}} \mathbbm{1}[X_{i} \in \cR_{m}],  \ \forall i\in [n],
\label{tree:model}
\end{equation}
where $X_i$ represents the $i$-th observed value for the $p$ predictors. 
Note that the $M$ parameters $\{\mu_{\cR_1}, \mu_{\cR_2}, \ldots, \mu_{\cR_m}\}$ in this TREE-model depend on the data and randomization variables through the splits generated during the tree-growing process of the RRT algorithm. 

This model is commonly used for predicting outcomes for new observations, yet an equally fundamental task is to draw statistical inference on the population parameters of the fitted model.
One natural task is performing inference on the mean parameters $\mu_{\cR}$, for $\cR \in \bcR$, where $\bcR$ is the observed value of $\bR$. 
Figure~\ref{fig:toy_eg_1} in Section \ref{sec:intro} shows the interval estimates of the mean parameters in the model fitted by the RRT.
Another important task is comparing differences between the mean parameters of two sibling regions within the tree, $\mu_{\cR}-\mu_{\cR'}$, if $\cR$ and $\cR'$ are two sibling regions in the TREE-model. 
As emphasized earlier, none of these tasks can be addressed with na{\"i}ve inference, due to the complex dependence between the TREE-model and the observed data.


\section{Exact pivot for inference}
\label{Sec:mainresults}

In this section, we focus on fixed-depth trees and develop inferential results for the mean parameter associated with an observed terminal region $\cR \in \bcR$. We provide a simplified overview of our inferential framework using a one-depth (single-split) tree in Section~\ref{app:onesplit} of the Supplementary Materials. 
Extensions of our method and their theoretical guarantees for a TREE-model fitted with adaptive stopping rules are provided in Section~\ref{app:adaptive} of the Supplementary Materials. 

To track variables in the subtree leading to the observed terminal region $\cR$, let $\PR = \left\{P_{\cR,1}=\mathbb{R}^p, P_{\cR,2}, \ldots, P_{\cR,L} \right\}$ 
denote the set of parent regions that were recursively split to obtain $\cR$, i.e., the last split on $P_{\cR,L}$ results in the terminal region $\cR$.
Let $\chi_{\cR, l}$ denote the set of possible splits at the parent region $P_{\cR,l}$, where $|\chi_{\cR, l}|=d_l$, and 
let
$$\SR =  \cbracket{S^{*}_{\cR,1}, S^{*}_{\cR,2}, \ldots,S^{*}_{\cR,L}}$$
be the series of $L$ random splits made on this sequence of parent regions in $\PR$ leading to $\cR$.
As before, we let $\sR= \cbracket{s^{*}_{\cR, 1}, s^{*}_{\cR, 2}, \ldots, s^{*}_{\cR, L}}$ denote the observed values of these splits, i.e., for the specific TREE-model realized when $Y=y$, we observe
$\{\SR =\sR\}$.
Furthermore, we let $\SRc = \bS \setminus \SR$ be the set of splits in the RRT that are not included in the subtree leading to $\cR$, and we let $\sRc$ be the observed values of these splits.
Recall that at each split in this subtree, the RRT Algorithm \ref{alg:RRT} adds independent randomization variables to the gain function. 
We collect these variables and denote them by
$\WR = \cbracket{\bW_{\cR,1},\ldots, \bW_{\cR,L}}$ where 
$\bW_{\cR, l} = \cbracket{W_{\cR, l}(s), \; \forall s \in \rchi_{\cR, l}}$.
For the reader's convenience, these notations are summarized in Table \ref{Table:notations2}.

\renewcommand{\arraystretch}{1.2}
{
\begin{table}[h!]
\centering
\begin{tabular}{ccc}
\toprule
 \makecell{Variable/Object} & \makecell{Definition} &  \makecell{Set/Vector}  \\
\midrule
$\PR$ & Parent regions of $\cR$ & $\cbracket{P_{\cR,1}, P_{\cR,2}, \ldots,P_{\cR,L}}$ \\
$\SR$ &  Splits leading to $\cR$ & $\cbracket{S^{*}_{\cR,1}, S^{*}_{\cR,2}, \ldots,S^{*}_{\cR,L}}$ \\
$\WR$ & \makecell{Randomization variables \\ linked with $\SR$} & $\cbracket{W_{\cR,1}, W_{\cR,2}, \ldots,W_{\cR,L}}$  \\
$\SRc$ &  Splits not relevant to $\cR$ & $\bS \setminus \SR$ \\
$\overline{D}_{\cR}$ & \makecell{Differences in gain functions\\ linked with $\cR$} &$\nbracket{D_{\cR,l}: l \in [L] }$  \\
\bottomrule
\end{tabular}
\caption{A list of notations for variables/objects in the subtree leading to an observed terminal region $\cR$}
\label{Table:notations2}
\end{table}
}

\subsection{Conditional distribution of interest and guarantees}

Our main result in this section, Theorem \ref{pivot:1}, provides the expression for an exact pivot, which is a function of the observed data $Y=y$ and our parameter of interest, $\mu_{\cR}$. 
Let $\nu_{\cR}$ be an $n$-dimensional vector such that 
$$(\nu_{\cR})_{i} = \dfrac{\mathbbm{1}[X_{i}\in \cR]}{\sqrt{n_{\cR}}}.$$
It is straightforward to see that, in the fitted TREE-model in \eqref{tree:model}, $\nu_{\cR}^\top\mu = \sqrt{n_{\cR}} \mu_{\cR}$.
In what follows, we develop a pivot for $\nu_{\cR}^\top\mu$.

Following the principles of conditional inference, we base inference on the conditional distribution of
$\nu_{\cR}^\top Y  \mid P_{\nu_{\cR}}^{\perp}Y = P_{\nu_{\cR}}^{\perp}y, \bS = \bs,$
where $\bS$ is the collection of all splits in $\Tree(Y, \bW)$, $\nu_{\cR}^\top Y$ is the sample mean of observations landing in $\cR$, scaled by $\sqrt{n_{\cR}}$, and 
$$ 
P_{\nu}^{\perp}Y= \left(I_n -  \frac{\nu \nu^{\top}}{||\nu||_{2}^{2}} \right)Y
$$
is the projection on to the orthogonal subspace spanned by the vector $\nu_{\cR}$.
Note that the na{\" i}ve Wald intervals are based on the Gaussian distribution of $\nu_{\mathcal{R}}^\top Y$, ignoring the dependence of $\cR$ on the observed data.

The exact pivot facilitates the construction of confidence intervals $\left\{\widehat{C}_{\cR} : \cR \in \bcR\right\}$, where $\widehat{C}_{\cR}$ is the interval estimate for $\mu_{\cR}$.
Noting that $\{\bS = \bs\} \subseteq \{R_{1}=\cR_1,  \ldots, R_{M}=\cR_M\}= \{\bR =\bcR\}$, the simple rationale behind conditioning on the splits is:
\begin{equation}
\label{cond:guarantee}
\begin{aligned}
\mathbb{P}\left[\mu_{\cR} \in \widehat{C}_{\cR}\mid \bS = \bs \right] \geq 1-\alpha 
&\implies \mathbb{P}\left[\mu_{\cR} \in \widehat{C}_{\cR}\mid \bR=\bcR \right] \geq 1-\alpha\\
&\implies \mathbb{P}\left[\mu_{\cR} \in \widehat{C}_{\cR} \right]\geq 1-\alpha.
\end{aligned}
\end{equation}
The implications in \eqref{cond:guarantee} are a direct consequence of the tower property of expectation.
The conditional guarantee on the left-hand side ensures coverage at the desired level for each individual parameter $\mu_{\cR}$ for $\cR\in \bcR$.
Moreover, following the same argument as \cite{Lee_2016}, it also controls the false coverage rate (FCR), defined in \cite{benjamini2005false}, at level $\alpha$, i.e., 
$$
\text{FCR} = \mathbb{E}\left[\dfrac{\Big|\left\{j\in [M]: \mu_{\cR_{j}} \notin \widehat{C}_{\cR_{j}} \right\}\Big|}{\max(M,1)} \right] \leq \alpha.
$$

\subsection{Exact pivot in closed-form}

To state our main result introducing the pivot, we first establish a few useful results analyzing  the conditional probability of the selection event given the observed data. 



Lemma \ref{L1} states that, to obtain the conditional density of interest, we only need to account for the selection of the splits in $\SR$, which form the subtree leading to $\cR$.

\begin{lemma}
\label{L1}
For an observed terminal region $\cR \in \bcR$, it holds that 
$$\mathbb{P}\rbracket{\bS = \bs \mid P_{\nu_{\cR}}^{\perp}Y = P_{\nu_{\cR}}^{\perp}y, \nu_{\cR}^\top Y  = t } \propto \mathbb{P}\rbracket{\SR = \sR \mid \nu_{\cR}^\top Y  = t, P_{\nu_{\cR}}^{\perp}Y =   P_{\nu_{\cR}}^{\perp}y},$$
when the probability on the left-hand side of the display is evaluated as a function of $\nu_{\cR}^\top Y = t \in \mathbb{R}$.
\end{lemma}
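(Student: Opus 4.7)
\textbf{Plan for the proof of Lemma \ref{L1}.} The strategy is to factor the joint event $\{\bS = \bs\}$ into the path-to-$\cR$ part and its complement, and then show that the complementary factor, when conditioned on the path and the sufficient statistic $P_{\nu_{\cR}}^\perp Y$, does not depend on $t = \nu_{\cR}^\top Y$. It will then serve as a multiplicative constant (in $t$), which is exactly the proportionality claim.

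First, I would write $\bS = \SR \cup \SRc$ and apply the chain rule:
\[
\mathbb{P}\bigl[\bS = \bs \mid P_{\nu_{\cR}}^\perp Y = P_{\nu_{\cR}}^\perp y,\, \nu_{\cR}^\top Y = t\bigr] = \mathbb{P}\bigl[\SR = \sR \mid P_{\nu_{\cR}}^\perp Y = P_{\nu_{\cR}}^\perp y,\, \nu_{\cR}^\top Y = t\bigr] \cdot Q(t),
\]
where $Q(t) = \mathbb{P}[\SRc = \sRc \mid \SR = \sR,\, P_{\nu_{\cR}}^\perp Y = P_{\nu_{\cR}}^\perp y,\, \nu_{\cR}^\top Y = t]$. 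The goal reduces to showing that $Q(t)$ does not depend on $t$.

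Next, I would unpack the structure of $\SRc$. Each split in $\SRc$ is made inside a subtree that hangs off the path $\PR$ at the sibling side of some $P_{\cR,l}$. Because the sibling subtree's root is disjoint from $\cR$, every split in that subtree is computed from the gain function evaluated on observations $Y_i$ with $X_i \notin \cR$, together with the randomization variables assigned to those splits (disjoint from $\WR$ by Algorithm \ref{alg:RRT}). For any $i$ with $X_i \notin \cR$ we have $(\nu_{\cR})_i = 0$, so $Y_i = (P_{\nu_{\cR}}^\perp Y)_i$; thus the data inputs to every split in $\SRc$ are fully determined by $P_{\nu_{\cR}}^\perp Y = P_{\nu_{\cR}}^\perp y$ and are functionally independent of $t$.

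Finally, I would argue that the randomizations indexing $\SRc$ are independent of both $Y$ and $\WR$, so conditioning further on $\{\SR = \sR\}$ and on $\{\nu_{\cR}^\top Y = t\}$ contributes no additional information:
\[
Q(t) = \mathbb{P}\bigl[\SRc = \sRc \mid P_{\nu_{\cR}}^\perp Y = P_{\nu_{\cR}}^\perp y\bigr],
\]
which is a constant in $t$. Substituting back yields the proportionality stated in the lemma. The only step requiring genuine care is the second one, namely the bookkeeping that every split contributing to $\SRc$ truly lives on a region disjoint from $\cR$ and that the randomization variables used to select those splits are independent of $(\WR, \nu_{\cR}^\top Y)$; once this is made precise, the rest is an application of conditional independence.
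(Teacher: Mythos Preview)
Your proof is correct and follows essentially the same route as the paper. Both arguments decompose $\bS=\SR\cup\SRc$, invoke the conditional independence of $\SR$ and $\SRc$ given $Y$ (from the independence of the randomization variables across regions), and use that every split in $\SRc$ lives in a region disjoint from $\cR$ so that its data inputs are $P_{\nu_\cR}^\perp Y$-measurable. The only cosmetic difference is the order of the chain rule: the paper writes $\mathbb{P}[\SR=\sR\mid \SRc=\sRc,\,Y]\cdot\mathbb{P}[\SRc=\sRc\mid Y]$ and then drops the extra conditioning in the first factor, whereas you write $\mathbb{P}[\SR=\sR\mid Y]\cdot\mathbb{P}[\SRc=\sRc\mid \SR=\sR,\,Y]$ and show the second factor is free of $t$; both reduce to the same pair of observations.
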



Proposition \ref{prop:1} uses Lemma \ref{L1} to obtain the conditional density of interest.

\begin{proposition}
\label{prop:1}
The conditional density of $\nu_{\cR}^\top Y \mid \cbracket{\bS = \bs, P_{\nu_{\cR}}^{\perp}Y = P_{\nu_{\cR}}^{\perp}y}$, when evaluated at $\nu_{\cR}^\top Y= t$, equals 
$$\dfrac{\phi \nbracket{t; \nu_{\cR}^\top\mu, \sigma^{2}||\nu_{\cR}||^2_2 } \times  \mathbb{P}\rbracket{\SR = \sR \mid \nu_{\cR}^\top Y  = t, P_{\nu_{\cR}}^{\perp}Y =   P_{\nu_{\cR}}^{\perp}y}}{\int\phi \nbracket{t'; \nu_{\cR}^\top\mu, \sigma^{2}||\nu_{\cR}||^2_2 } \times  \mathbb{P}\rbracket{\SR = \sR \mid \nu_{\cR}^\top Y  = t', P_{\nu_{\cR}}^{\perp}Y =   P_{\nu_{\cR}}^{\perp}y }dt'}.$$
\end{proposition}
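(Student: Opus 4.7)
The plan is to derive the stated conditional density via an elementary Bayes rule argument on the two-level conditioning event, then invoke Lemma~\ref{L1} to reduce the selection event to the subtree leading to $\cR$, and finally normalize.

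First, I would use the orthogonal decomposition $Y = \frac{\nu_{\cR}}{\|\nu_{\cR}\|_2^2}(\nu_{\cR}^\top Y) + P_{\nu_{\cR}}^{\perp} Y$. Since $Y \sim \mathcal{N}(\mu, \sigma^2 I_n)$, the two components $\nu_{\cR}^\top Y$ and $P_{\nu_{\cR}}^\perp Y$ are jointly Gaussian with zero covariance, hence independent. In particular, the marginal density of $\nu_{\cR}^\top Y$ is $\phi(\cdot; \nu_{\cR}^\top \mu, \sigma^2 \|\nu_{\cR}\|_2^2)$, and it agrees with the conditional density of $\nu_{\cR}^\top Y$ given $P_{\nu_{\cR}}^\perp Y = P_{\nu_{\cR}}^\perp y$.

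Next, by the definition of conditional density, the density of $\nu_{\cR}^\top Y$ given the event $\{\bS = \bs,\ P_{\nu_{\cR}}^\perp Y = P_{\nu_{\cR}}^\perp y\}$, evaluated at $t$, is proportional (as a function of $t$) to
\begin{equation*}
\phi\bigl(t;\ \nu_{\cR}^\top \mu,\ \sigma^2 \|\nu_{\cR}\|_2^2\bigr) \cdot \mathbb{P}\bigl[\bS = \bs \mid \nu_{\cR}^\top Y = t,\ P_{\nu_{\cR}}^\perp Y = P_{\nu_{\cR}}^\perp y\bigr],
\end{equation*}
where the proportionality constant depends only on $P_{\nu_{\cR}}^\perp y$ and $\bs$, not on $t$. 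This is just Bayes' rule applied on the product space $(\nu_{\cR}^\top Y, P_{\nu_{\cR}}^\perp Y)$, using the independence established in the previous step to replace the marginal conditional density by $\phi$.

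Then I would invoke Lemma~\ref{L1} to replace the selection probability $\mathbb{P}[\bS = \bs \mid \cdots]$ by $\mathbb{P}[\SR = \sR \mid \cdots]$ up to a factor that does not depend on $t$, so proportionality as a function of $t$ is preserved. Finally, normalizing the resulting unnormalized density by integrating over $t' \in \mathbb{R}$ yields exactly the expression in the statement. The only subtle point, which I would handle carefully, is verifying that the factor absorbed via Lemma~\ref{L1} is genuinely $t$-free (that is the content of the lemma); beyond that, the derivation is just Bayes' rule combined with the Gaussian orthogonality, with no obstacle to speak of.
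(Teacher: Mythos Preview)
Your proposal is correct and follows essentially the same approach as the paper's own proof: use the independence of $\nu_{\cR}^\top Y$ and $P_{\nu_{\cR}}^\perp Y$ to write the conditional density as proportional to $\phi(t;\nu_{\cR}^\top\mu,\sigma^2\|\nu_{\cR}\|_2^2)\cdot\mathbb{P}[\bS=\bs\mid \nu_{\cR}^\top Y=t,\,P_{\nu_{\cR}}^\perp Y=P_{\nu_{\cR}}^\perp y]$, then apply Lemma~\ref{L1} to replace $\bS=\bs$ by $\SR=\sR$ up to a $t$-free factor, and normalize. The paper's argument is terser but identical in substance.
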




Our next result, Proposition \ref{prop:cond:density}, provides an expression for the probability of the selected splits $\{\SR = \sR\}$, given data, as an explicit function of  $(\nu_{\cR}^{\top}Y, P_{\nu_{\cR}}^{\perp}Y)$. This forms the basis of our pivot.

For the series of $L$ selected splits $\sR=\cbracket{s^{*}_{\cR,1}, s^{*}_{\cR,2}, \ldots, s^{*}_{\cR,L}}$, define for each $l \in [L]$:
\begin{align}
\begin{gathered}
\beta_{\sRl}(t, P_{\nu_{\cR}}^{\perp}y) = \left( G\nbracket{y(t) ;P_{\cR,l}, s} -G\nbracket{y(t) ;P_{\cR,l},\sRl}:  s\in \rchi_{\cR,l}\setminus \{\sRl\}\right)\in \mathbb{R}^{d_l-1},\\
\Omega_{\cR,l} = \tau_{P_{\cR,l}}^2 
\nbracket{I_{d_l-1} + 1_{d_l-1}1_{d_l-1}^{\top}}\in \mathbb{R}^{(d_l-1) \times (d_l-1)},
\end{gathered}
\label{eq:mean_opt_var}
\end{align}
where 
$$y(t) =  t\dfrac{\nu}{||\nu||_{2}^{2}} + P_{\nu_{\cR}}^{\perp}y$$
is the response reconstructed with $P_{\nu_{\cR}}^{\perp}Y = P_{\nu_{\cR}}^{\perp}y$, the observed value from our data, and $\nu_{\cR}^\top Y = t$.
The vector $\beta_{\sRl}(t, P_{\nu_{\cR}}^{\perp}y)$ collects the difference in the information gains between the optimal split $\sRl$ and the losing splits $s\in \rchi_{\cR,l}\setminus \{\sRl\}$ that were not chosen.
Then, we define
\begin{align*}
\begin{gathered}
\Lambda_{\sRl}(t, P_{\nu_{\cR}}^{\perp}y) = \int_{\mathbb{R}_{+}^{d_{l}-1}} \phi\nbracket{u; \beta_{\sRl}(t, P_{\nu_{\cR}}^{\perp}y), \Omega_{\cR,l}} du,
\end{gathered}
\end{align*}
a function of $(t, P_{\nu_{\cR}}^{\perp}y)$. 
Note that $\Lambda_{\sRl}(t, P_{\nu_{\cR}}^{\perp}y)$ is a Gaussian integral over the $(d_l-1)$-dimensional positive orthant.

\begin{proposition}
\label{prop:cond:density}
Given an observed terminal region $\cR \in \bcR$, we have that 
$$\mathbb{P}\rbracket{\SR = \sR \mid \nu_{\cR}^{\top}Y =t, P_{\nu_{\cR}}^{\perp}Y=P_{\nu_{\cR}}^{\perp}y} = \prod_{l=1}^{L} \Lambda_{\sRl}(t, P_{\nu_{\cR}}^{\perp}y).$$
\end{proposition}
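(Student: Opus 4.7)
The plan is to factor the joint probability across the $L$ splits in the subtree leading to $\cR$ using the sequential structure of Algorithm~\ref{alg:RRT}, and then to evaluate each factor as a Gaussian orthant integral exactly as in Steps~1--2 of Section~\ref{sec:basic idea}. Abbreviate $y(t) = t\nu_{\cR}/\|\nu_{\cR}\|_2^2 + P_{\nu_{\cR}}^\perp y$, so that the conditioning event $\{\nu_{\cR}^\top Y = t,\; P_{\nu_{\cR}}^\perp Y = P_{\nu_{\cR}}^\perp y\}$ fixes $Y = y(t)$.

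\textbf{Sequential factorization.} By the chain rule applied to the nested events $\{\SRl = \sRl\}$ for $l = 1, \ldots, L$, I would write
\begin{align*}
\mathbb{P}\bigl[\SR = \sR \mid Y = y(t)\bigr] = \prod_{l=1}^{L} \mathbb{P}\bigl[\SRl = \sRl \mid S^{*}_{\cR, l'} = s^{*}_{\cR, l'},\; l' < l,\; Y = y(t)\bigr].
\end{align*}
Once the first $l-1$ splits are fixed at $(s^{*}_{\cR, 1}, \ldots, s^{*}_{\cR, l-1})$, the parent region $P_{\cR, l}$ and the candidate set $\chi_{\cR, l}$ are deterministic, so the remaining randomness driving $\SRl$ is only the level-$l$ randomization vector $\bW_{\cR, l}$. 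Because $\bW_{\cR, 1}, \ldots, \bW_{\cR, L}$ are mutually independent and jointly independent of $Y$ by construction in Algorithm~\ref{alg:RRT}, and the earlier splits are measurable with respect to $(Y, \bW_{\cR, 1}, \ldots, \bW_{\cR, l-1})$, conditioning on this history does not alter the conditional distribution of $\bW_{\cR, l}$ given $Y$. Each factor therefore collapses to $\mathbb{P}[\SRl = \sRl \mid Y = y(t)]$, with $P_{\cR, l}$ now treated as a fixed region.

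\textbf{Per-level orthant integral and main obstacle.} Mirroring Step~1 of Section~\ref{sec:basic idea}, the level-$l$ event rewrites via the randomized argmax as
\begin{align*}
\{\SRl = \sRl\} = \bigl\{W_{\cR, l}(\sRl) - W_{\cR, l}(s) \ge G(Y; P_{\cR, l}, s) - G(Y; P_{\cR, l}, \sRl)\ \forall\, s \in \chi_{\cR, l} \setminus \{\sRl\}\bigr\}.
\end{align*}
Setting $U_l = (W_{\cR, l}(\sRl) - W_{\cR, l}(s))_{s \neq \sRl}$, independence and common variance $\tau_{P_{\cR, l}}^2$ of the entries of $\bW_{\cR, l}$ yield $U_l \sim \mathcal{N}(0, \Omega_{\cR, l})$: each diagonal entry equals $2\tau_{P_{\cR, l}}^2$ (sum of two independent variances) and each off-diagonal entry equals $\tau_{P_{\cR, l}}^2$ (contribution of the shared term $W_{\cR, l}(\sRl)$), giving $\Omega_{\cR, l} = \tau_{P_{\cR, l}}^2(I_{d_l-1} + 1_{d_l-1} 1_{d_l-1}^\top)$. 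Fixing $Y = y(t)$ turns the right-hand side of the inequalities into the deterministic vector $\beta_{\sRl}(t, P_{\nu_{\cR}}^\perp y)$, so $\mathbb{P}[\SRl = \sRl \mid Y = y(t)]$ reduces to the probability that the Gaussian vector $U_l$ dominates $\beta_{\sRl}(t, P_{\nu_{\cR}}^\perp y)$ componentwise; an affine change of integration variable converts this into the positive-orthant Gaussian integral $\Lambda_{\sRl}(t, P_{\nu_{\cR}}^\perp y)$. Taking the product across $l$ yields the claim. The only non-routine step is the sequential collapse of conditioning on earlier splits, which rests on the joint independence of $(\bW_{\cR, 1}, \ldots, \bW_{\cR, L}, Y)$ and must be stated carefully; the remaining per-level computation is a direct Gaussian calculation.
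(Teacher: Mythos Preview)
Your proposal is correct and follows essentially the same route as the paper's proof: factor the probability across the $L$ levels using independence of the randomization vectors $\bW_{\cR,l}$ given $Y$, then reduce each factor to a Gaussian orthant integral via the argmax rewriting and the covariance computation for the pairwise differences. The only cosmetic difference is that you invoke the chain rule and then collapse the history, whereas the paper states the conditional independence $S^*_{j}\indep S^*_{j'}\mid Y(t)=y(t)$ directly; your phrasing is arguably the more careful of the two, since it makes explicit that $P_{\cR,l}$ and $\chi_{\cR,l}$ become deterministic only after fixing the earlier splits.
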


 \begin{proof}
We begin by noting that
\begin{equation}
 \begin{aligned}
     \cbracket{\SR= \sR} 
     =&\cbracket{S^{*}_{\cR,1}= s^{*}_{\cR,1}, S^{*}_{\cR,2}=s^{*}_{\cR,2}, \ldots, S^{*}_{\cR,L}=s^{*}_{\cR,L}} \\
     =& \bigcap_{l=1}^{L}\cbracket{\SRl= \sRl} \\
     =& \bigcap_{l=1}^{L}\Big\{G(Y;P_{\cR,l},  \sRl) - G(Y;P_{\cR,l},s) + W_{\cR, l}(\sRl) - W_{\cR, l}(s) \geq 0, \ \\
     &\;\;\;\;\;\;\;\;\;\;\;\;\;\;\;\;\;\;\;\;\;\;\;\;\;\;\;\;\;\;\;\;\;\;\;\;\;\;\;\;\;\;\;\;\;\;\;\;\;\;\;\;\;\;\;\;\;\;\;\;\;\;\;\;\; \;\;\;\;\;\;\;\;\;   \forall s \in \rchi^{*}_{\cR,l}\setminus \{\sRl\}\Big\} \\
     =& \bigcap_{l=1}^{L}\Big\{W_{\cR, l}(\sRl) - W_{\cR, l}(s) \geq  G(Y;P_{\cR,l},s) - G(Y;P_{\cR,l},  \sRl), \\
     & \;\;\;\;\;\;\;\;\;\;\;\;\;\;\;\;\;\;\;\;\;\;\;\;\;\;\;\;\;\;\;\;\;\;\;\;\;\;\;\;\;\;\;\;\;\;\;\;\;\;\;\;\;\;\;\;\;\;\;\;\;\;\;\;\; \;\;\;\;\;\;\;\;\;   \forall s \in \rchi^{*}_{\cR,l}\setminus \{\sRl\}\Big\}.
 \end{aligned}
 \label{event:description}
\end{equation} 
Computing the probability of this event, it holds that 
 \begin{align*}
& \mathbb{P}\rbracket{\SR = \sR \mid \nu_{\cR}^{\top}Y =t, P_{\nu_{\cR}}^{\perp}Y=P_{\nu_{\cR}}^{\perp}y}\\
     =& \prod_{l=1}^{L} \mathbb{P}\rbracket{\SRl= \sRl \mid \nu_{\cR}^{\top}Y =t, P_{\nu_{\cR}}^{\perp}Y=P_{\nu_{\cR}}^{\perp}y} \\
     =& \prod_{l=1}^{L} \mathbb{P}\Big[W_{\cR, l}(\sRl) - W_{\cR, l}(s) \geq  G(Y;P_{\cR,l},s) - G(Y;P_{\cR,l},  \sRl), \\
     & \;\;\;\;\;\;\;\;\;\;\;\;\;\;\;\;\;\;\;\;\;\;\;\;\;\;\;\;\;\;\;\;\;\;\;\;\;\;\;\;\;\;\;\;\;\;\;\;\;\;\;\;\;\;\;\;\;\;\;\;\; \forall s \in \rchi^{*}_{\cR,l} \setminus \{\sRl\} \mid \nu_{\cR}^{\top}Y =t, P_{\nu_{\cR}}^{\perp}Y=P_{\nu_{\cR}}^{\perp}y \Big] \\
      =& \prod_{l=1}^{L} \mathbb{P}\Big[W_{\cR, l}(\sRl) - W_{\cR, l}(s) \geq  G(y(t);P_{\cR,l},s) - G(y(t);P_{\cR,l},  \sRl), \ \forall s \in \rchi^{*}_{\cR,l} \setminus \{\sRl\}\Big] \\ 
     =&  \prod_{l=1}^{L} \Lambda_{\sRl}(t, P_{\nu_{\cR}}^{\perp}y). \tag{5}\label{integral:form}
 \end{align*}

Here, the first display is due to the independence between the randomization variables at different splits, and as a result, we have that $S^{*}_{j} \indep S^{*}_{j'} \mid Y(t) = y(t)$ for $j\neq j'$. The second display uses the description of the event in \eqref{event:description}. The third display uses the independence between the external randomization variables and $Y$. We arrive at the final display by noting that $$\nbracket{W_{\cR, l}(\sRl) - W_{\cR, l}(s): s \in \rchi^{*}_{\cR,l}\setminus \{\sRl\}} \in \mathbb{R}^{d_{l}-1}$$ is distributed as a normal random variable with mean $0_{d_l-1}$ and covariance $\Omega_{\cR,l}$.

In the last step, letting
$$T_{\sRl}=\left\{ t\in \mathbb{R}^{d_l-1}: (t)_j \geq G(y(t);P_{\cR,l},s) - G(y(t);P_{\cR,l},  \sRl)\ \forall j \in [d_l-1]\right\},$$
we observe that
\begin{equation*}
\begin{aligned}
& \mathbb{P}\Big[W_{\cR, l}(\sRl) - W_{\cR, l}(s) \geq  G(y(t);P_{\cR,l},s) - G(y(t);P_{\cR,l},  \sRl), \ \forall s \in \rchi^{*}_{\cR,l} \setminus \{\sRl\}\Big] \\
& =\int_{T_{\sRl}} \phi(u'; 0_{d_l-1}, \Omega_{\cR, l})du' \\
&= \int_{\mathbb{R}_{+}^{d_{l}-1}} \phi\nbracket{u; \beta_{\sRl}(t, P_{\nu_{\cR}}^{\perp}y), \Omega_{\cR,l}} du\\
&=  \Lambda_{\sRl}(t, P_{\nu_{\cR}}^{\perp}y).
\end{aligned}
\end{equation*}
This completes the proof.
 \end{proof}
 
At last, let
$$\Lambda_{\sR}(t, P_{\nu_{\cR}}^{\perp}y)=\prod_{l=1}^{L} \Lambda_{\sRl}(t, P_{\nu_{\cR}}^{\perp}y).$$ 
In Theorem \ref{pivot:1}, we present the final expression for an exact pivot, derived using the conditional density from Proposition \ref{prop:1} and plugging into it the probability expression from Proposition \ref{prop:cond:density}.

\begin{theorem}
Given data $Y=y$, let 
$$P(\nu_{\cR}^\top y,P_{\nu_{\cR}}^{\perp}y; \nu_{\cR}^\top \mu) = \frac{\int_{-\infty}^{\nu_{R}^\top y} \phi(t; \nu_{\cR}^\top \mu, \sigma^2||\nu_{\cR}||^2_2) \Lambda_{\sR} (t, P_{\nu_{\cR}}^{\perp}y) dt}{\int_{-\infty}^{+\infty}\phi(t; \nu_{\cR}^\top \mu, \sigma^2||\nu_{\cR}||^2_2) \Lambda_{\sR} (t, P_{\nu_{\cR}}^{\perp}y)dt}.$$
Then, it holds that
$$P(\nu_{\cR}^\top Y,P_{\nu_{\cR}}^{\perp}Y; \nu_{\cR}^\top \mu)\; \lvert \; \bS=\bs \sim \text{\normalfont Unif}(0,1).$$
\label{pivot:1}
\end{theorem}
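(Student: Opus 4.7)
The plan is to recognize the pivot $P(\nu_{\cR}^\top y, P_{\nu_{\cR}}^{\perp}y; \nu_{\cR}^\top \mu)$ as nothing more than the conditional CDF of $\nu_{\cR}^\top Y$ given $\{\bS = \bs,\ P_{\nu_{\cR}}^{\perp}Y = P_{\nu_{\cR}}^{\perp}y\}$, evaluated at the observed value $\nu_{\cR}^\top y$, and then to apply the probability integral transform together with the tower property to reach the unconditional statement in the theorem.

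First, I would combine Proposition \ref{prop:1} and Proposition \ref{prop:cond:density}. Proposition \ref{prop:1} already gives the conditional density of $\nu_{\cR}^\top Y$ given $\{\bS = \bs,\ P_{\nu_{\cR}}^{\perp}Y = P_{\nu_{\cR}}^{\perp}y\}$ as a ratio whose key selection-probability factor is identified in Proposition \ref{prop:cond:density} to equal $\Lambda_{\sR}(t, P_{\nu_{\cR}}^{\perp}y)$. Integrating this density from $-\infty$ to $\nu_{\cR}^\top y$ yields exactly the expression defining $P(\nu_{\cR}^\top y, P_{\nu_{\cR}}^{\perp}y; \nu_{\cR}^\top \mu)$. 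Hence, writing $F(\,\cdot\, ;\, P_{\nu_{\cR}}^{\perp}y,\, \nu_{\cR}^\top\mu,\, \bs)$ for this conditional CDF, we have the identification $P(\nu_{\cR}^\top y, P_{\nu_{\cR}}^{\perp}y; \nu_{\cR}^\top \mu) = F(\nu_{\cR}^\top y;\, P_{\nu_{\cR}}^{\perp}y,\, \nu_{\cR}^\top\mu,\, \bs)$.

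Second, I would verify that this conditional distribution is continuous in $t$, which is immediate because the conditional density is a Gaussian density times the strictly positive, continuous function $\Lambda_{\sR}(t, P_{\nu_{\cR}}^{\perp}y)$ (a Gaussian integral over a positive orthant depending smoothly on $t$). The probability integral transform then gives
\begin{equation*}
F(\nu_{\cR}^\top Y;\, P_{\nu_{\cR}}^{\perp}Y,\, \nu_{\cR}^\top\mu,\, \bs) \,\Big|\, \bigl\{\bS = \bs,\ P_{\nu_{\cR}}^{\perp}Y = P_{\nu_{\cR}}^{\perp}y\bigr\} \;\sim\; \text{Unif}(0,1),
\end{equation*}
for every realization $P_{\nu_{\cR}}^{\perp}y$ of $P_{\nu_{\cR}}^{\perp}Y$.

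Third, I would remove the extra conditioning on $P_{\nu_{\cR}}^{\perp}Y$. Since the above uniform law holds for each fixed value of $P_{\nu_{\cR}}^{\perp}Y$ (with the same distribution, $\text{Unif}(0,1)$, independent of that value), the tower property gives, for any $u \in (0,1)$,
\begin{equation*}
\mathbb{P}\!\left[P(\nu_{\cR}^\top Y, P_{\nu_{\cR}}^{\perp}Y; \nu_{\cR}^\top \mu) \le u \,\big|\, \bS = \bs\right] = \mathbb{E}\!\left[\mathbb{P}\!\left[\,\cdot\, \big|\, \bS = \bs,\ P_{\nu_{\cR}}^{\perp}Y\right]\,\big|\, \bS=\bs\right] = u,
\end{equation*}
which is the claim. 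The only mildly subtle point, and the one I would be most careful about, is the measurability justifying the tower step and the continuity of $F$ required by the probability integral transform; both reduce to the fact that $\Lambda_{\sR}(t, P_{\nu_{\cR}}^{\perp}y)$ is a jointly continuous, strictly positive function (a finite product of orthant Gaussian integrals whose mean vectors depend continuously on $(t, P_{\nu_{\cR}}^{\perp}y)$), so no degenerate selection event arises.
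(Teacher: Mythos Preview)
Your proposal is correct and follows essentially the same approach as the paper: combine Propositions \ref{prop:1} and \ref{prop:cond:density} to obtain the conditional density, recognize the pivot as the resulting conditional CDF, and apply the probability integral transform. The paper's proof is terser (it does not spell out the continuity check or the tower step to remove the conditioning on $P_{\nu_{\cR}}^{\perp}Y$), but the argument is the same.
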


The above result implies that for our observed data $Y=y$, the pivot for $\nu_{\cR}^\top \mu$ equals $P(\nu_{\cR}^\top y,P_{\nu_{\cR}}^{\perp}y; \nu_{\cR}^\top \mu)$.
This pivot ensures valid inference in the RRT, achieving the nominal coverage rate while also controlling the FCR over the  terminal regions fit with the RRT.

The proposed pivot from Theorem \ref{pivot:1} calculated under the null $\nu_{\cR}^\top \mu = 0$ serves as valid p-value for testing. Moreover, to obtain confidence intervals for the selected target $\nu_{\cR}^\top \mu$, we simply invert our pivot. For example, two-sided confidence intervals at level $\alpha$ are constructed as
$$
\left(L_{\cR}, U_{\cR}\right)=\left\{\nu_{\cR}^\top \mu: P(\nu_{\cR}^\top y,P_{\nu_{\cR}}^{\perp}y; \nu_{\cR}^\top \mu) \in\left[\frac{\alpha}{2}, 1-\frac{\alpha}{2}\right]\right\}.
$$
We also note that the same idea can be directly extended to construct a pivot for the difference between the mean parameters of sibling regions,  $\cR$ and $\cR'$.

\subsection{Numerically efficient inference using the exact pivot}

In Theorem \ref{pivot:1}, $\Lambda_{\sR}(t, P_{\nu_{\cR}}^{\perp}y)$, for $t \in \mathbb{R}$, is the correction function ensuring valid inference in the RRT.
Multiplying this correction function with the na{\"i}ve density of $\nu_{\cR}^\top Y$ takes into account the data-dependent nature of $\cR$, the terminal region linked to our parameter of interest $\mu_{\cR}$. 

Revisiting the correction function, we see that it decomposes into a product of $L$ integrals over the independent randomization variables at $L$ splits, as $\Lambda_{\mathcal{R}}(t, P_{\nu_{\mathcal{R}}}^{\perp}y)= \prod_{l=1}^{L} \Lambda_{\sRl}(t, P_{\nu_{\cR}}^{\perp}y)$.  
Each integral in this product, $\Lambda_{\sRl}(t, P_{\nu_{\cR}}^{\perp}y)$, is $(d_l - 1)$-dimensional, where $d_l = |\chi_l|$.
Computing these integrals in this form would be computationally burdensome, particularly when the number of candidate splits $d_l$ on parent region $P_{\cR,l}$ is large.

However, as our next result in Proposition \ref{prop:1d-intergral} shows, due to our specific form of randomization, the correction at each split can be rewritten as a $1$-dimensional integral. 

\begin{proposition}
    \label{prop:1d-intergral}
    It holds that 
    $$
    \Lambda_{\sRl}(t, P_{\nu_{\cR}}^{\perp}y)=\int_{\mathbb{R}} \Gamma_{\sRl}(w; t, P_{\nu_{\cR}}^{\perp}y)\phi(w;0,1) dw,
    $$
    where 
    $\Gamma_{\sRl}(w; t, P_{\nu_{\cR}}^{\perp}y)= \displaystyle\prod_{s\in \rchi^{*}_{\cR,l}\setminus \{\sRl\}} \Phi\left(\dfrac{w-\left[\beta_{\sRl}(t, P_{\nu_{\cR}}^{\perp}y)\right]_s}{\tau_{P_{\cR,l}}}\right)$ , and 
    $$\left[\beta_{\sRl}(t, P_{\nu_{\cR}}^{\perp}y)\right]_s = G(y(t);P_{\cR,l},s) - G(y(t);P_{\cR,l},  \sRl)$$ is the $s$-th entry of $\beta_{\sRl}(t, P_{\nu_{\cR}}^{\perp}y)$ corresponding to split $s\in \rchi^{*}_{\cR,l}\setminus \{\sRl\}$, and $\Phi(\cdot)$ is the CDF of the standard normal distribution.
\end{proposition}
\begin{proof}
Notice from \eqref{integral:form} that $\Lambda_{\sRl}(t, P_{\nu_{\cR}}^{\perp}y)$ is equal to 
\begin{align*}
 & \mathbb{P}\Big[W_{\cR, l}(\sRl) - W_{\cR, l}(s) \geq  G(y(t);P_{\cR,l},s) - G(y(t);P_{\cR,l},  \sRl), \ \forall s \in \rchi^{*}_{\cR,l} \setminus \{\sRl\}\Big].
\end{align*}    
Then, the right-hand side of this display simplifies to
\begin{align*}
    & \mathbb{P}\Big[W_{\cR, l}(s) \leq W_{\cR, l}(\sRl) - \left[\beta_{\sRl}(t, P_{\nu_{\cR}}^{\perp}y)\right]_s, \ \forall s \in \rchi^{*}_{\cR,l} \setminus \{\sRl\}\Big] \\
    = & \int_{\mathbb{R}} \mathbb{P}\Big[W_{\cR, l}(s) \leq w - \left[\beta_{\sRl}(t, P_{\nu_{\cR}}^{\perp}y)\right]_s, \forall s \in \rchi^{*}_{\cR,l} \setminus \{\sRl\} \mid W_{\cR, l}(\sRl) = w\Big] \phi(w;0,1)  dw \\
    = & \int_{\mathbb{R}} \displaystyle\prod_{s\in \rchi^{*}_{\cR,l}\setminus \{\sRl\}} \mathbb{P}\Big[W_{\cR, l}(s) \leq w - \left[\beta_{\sRl}(t, P_{\nu_{\cR}}^{\perp}y)\right]_s\Big] \phi(w;0,1) \ dw\tag{6} \label{eq:prod_decompose}\\
    = & \int_{\mathbb{R}} \displaystyle\prod_{s\in \rchi^{*}_{\cR,l}\setminus \{\sRl\}} \Phi\left(\dfrac{w-\left[\beta_{\sRl}(t, P_{\nu_{\cR}}^{\perp}y)\right]_s}{\tau_{P_{\cR,l}}}\right) \phi(w;0,1)\ dw \\
    =&\int_{\mathbb{R}} \Gamma_{\sRl}(w; t, P_{\nu_{\cR}}^{\perp}y)\phi(w;0,1)\ dw,
\end{align*}
where the decomposition in \eqref{eq:prod_decompose} follows from the independence of randomization variables across the splits, i.e.,
\begin{equation*}
    \begin{aligned}
 \mathbb{P}\Big[W_{\cR, l}(s) \leq w & - \left[\beta_{\sRl}(t, P_{\nu_{\cR}}^{\perp}y)\right]_s, \ \forall s \in \rchi^{*}_{\cR,l} \setminus \{\sRl\} \mid W_{\cR, l}(\sRl) = w\Big]\\
&= \mathbb{P}\Big[W_{\cR, l}(s) \leq w - \left[\beta_{\sRl}(t, P_{\nu_{\cR}}^{\perp}y)\right]_s, \ \forall s \in \rchi^{*}_{\cR,l} \setminus \{\sRl\}\Big], \; \text{ and}
   \end{aligned}
\end{equation*}
\begin{align*}
\mathbb{P}\Big[W_{\cR, l}(s) \leq w &- \left[\beta_{\sRl}(t, P_{\nu_{\cR}}^{\perp}y)\right]_s, \ \forall s \in \rchi^{*}_{\cR,l} \setminus \{\sRl\}\Big] \\
&= \displaystyle\prod_{s\in \rchi^{*}_{\cR,l}\setminus \{\sRl\}} \mathbb{P}\Big[W_{\cR, l}(s) \leq w - \left[\beta_{\sRl}(t, P_{\nu_{\cR}}^{\perp}y)\right]_s\Big].
\end{align*}
\end{proof}
 
As an immediate consequence of Proposition \ref{prop:1d-intergral}, it follows that the correction function in our pivot equals
$$
\Lambda_{\mathcal{R}}(t, P_{\nu_{\mathcal{R}}}^{\perp}y)= \prod_{l=1}^L \int_{\mathbb{R}} \Gamma_{\sRl}(w; t, P_{\nu_{\cR}}^{\perp}y)\phi(w;0,1) dw,
$$
i.e., it is a product of $L$ fairly simple $1$-dimensional integrals. 
This simplification leads to an efficient inference algorithm based on our exact pivot, where the univariate integral with respect to the standard normal density is numerically approximated via Gauss–Hermite quadrature, as outlined in Algorithm~\ref{algo:pivot}.

\begin{algorithm}[h]
\caption{Numerical algorithm for computing our pivot for $\nu_{\cR}^\top\mu$}
\label{algo:pivot}
\KwIn{Data $(X, Y)$, terminal region $\cR \in \bcR$, grid $G \gets \{t_0, t_1, ..., t_{M}: t_i - t_{i-1} = \delta\}$, $\delta > 0$, number of nodes for Gauss-Hermite quadrature $m \in \mathbb{N}$}
\KwOut{Numerically-evaluated pivot $\widetilde{P}(\nu_{\cR}^\top y,P_{\nu_{\cR}}^{\perp}y; \nu_{\cR}^\top \mu)$}

\ForEach{$t \in G$}{
    Compute and store the unadjusted density $\phi(t; \nu_{\cR}^\top \mu, \sigma^2||\nu_{\cR}||^2_2)$\;
    \ForEach{$l \in [L]$}{
        Compute $\beta_{\sRl} (t, P_{\nu_{\cR}}^{\perp}y)$ as in \eqref{eq:mean_opt_var}\;
        Numerically compute the split-wise selection probability 
        $$\widetilde\Lambda_{\sRl}(t, P_{\nu_{\cR}}^{\perp}y) \gets \text{Gauss\_Hermite}_m\left(\Gamma_{\sRl}(\cdot; t, P_{\nu_{\cR}}^{\perp}y)\right) \ ;$$
        Store $\widetilde\Lambda_{\sRl}(t, P_{\nu_{\cR}}^{\perp}y)$\;
    }
    Compute the numerical approximation to the selection probability 
        $$\widetilde\Lambda_{\mathcal{R}}(t, P_{\nu_{\mathcal{R}}}^{\perp}y) \gets \prod_{l=1}^L \widetilde\Lambda_{\sRl}(t, P_{\nu_{\cR}}^{\perp}y)\ ;$$
    Compute the numerical approximation to the conditional density 
        $$\widetilde{d}(t) \gets \phi(t; \nu_{\cR}^\top \mu, \sigma^2||\nu_{\cR}||^2_2) \widetilde\Lambda_{\mathcal{R}}(t, P_{\nu_{\mathcal{R}}}^{\perp}y)\ ;$$
}
Numerically compute the pivot as: 
$\widetilde{P}(\nu_{\cR}^\top y,P_{\nu_{\cR}}^{\perp}y; \nu_{\cR}^\top \mu)
\gets 
\left(\sum_{t\in G} \widetilde{d}(t)\right)^{-1}\cdot {\sum\limits_{t\in G: t \leq \nu_{\cR}^\top y} \widetilde{d}(t)}$\;

\Return $\widetilde{P}(\nu_{\cR}^\top y,P_{\nu_{\cR}}^{\perp}y; \nu_{\cR}^\top \mu).$
\end{algorithm}

\section{Empirical Analyses}
\label{sec:simulation}

\subsection{Simulations}
Here we present a simulation study to demonstrate the suitability of our proposed method and compare to existing approaches.
We generate data from a surface-plus-noise model, as described in Section \ref{sec:firstexample}, with $X \in \mathbb{R}^{n \times p}$ with $X_{i j} \stackrel{i.i.d.}{\sim} N(0,1)$, $n=200$, and $Y = \mu + \epsilon$ where random noise is drawn as $\epsilon_i \overset{i.i.d.}{\sim} D$.
The vector $\mu$ given by $$\mu_i=b \times\left[1_{\left(x_{i, 1} \leq 0\right)} \times\left\{1+a 1_{\left(x_{i, 2}>0\right)}+1_{\left(x_{i, 3} \times x_{i, 2}>0\right)}\right\}\right]$$ defines a three-level tree where $a=1$, $b=2$ determine the signal strength. 
We present results under Gaussian noise, where our pivot provides exact inference guarantees, while its performance under misspecified error distributions is deferred to Appendix~\ref{app:addsims}.

For each fitting method, we set the maximum depth of the final tree as 3, the minimum number of samples in a node to be split further as 50, the minimum size of terminal nodes as 20, and leave the grown tree unpruned after the stopping criterion is met.
After fitting the trees, we construct confidence intervals for parameters associated with the terminal regions, $\nu_{\cR}^\top \mu$ for $\cR \in \bcR$, using our proposed method along with the three baseline methods described in Section \ref{sec:firstexample}.
Note that $\nu_{\cR}^\top \mu$ is a well-defined parameter even when the fitted TREE-model only approximates the true generating distribution, which is never known in practice.
In any practical application, this parameter is always interpreted with respect to the model fitted to the observed data.

We consider the following evaluation metrics for comparison:
\begin{enumerate}
    \item \textbf{Coverage rate}: In each round of simulation, for a regression tree $TREE$ fitted with terminal nodes $\overline{R}$, we compute the coverage rate as ${\left|\left\{R \in \overline{R}: \nu_R^\top \mu \in \text{CI}_R\right\}\right|}/{|\overline{R}|}$.

    \item \textbf{Average CI length}: To measure the inferential power of the tests, we report the average length of the confidence intervals $\text{CI}_R = (L_R, U_R)$, i.e., $\{\sum_{R \in \overline{R}} (U_R - L_R)\}/{|\overline{R}|}$.
    \item \textbf{Test MSE}: To examine the model fitting quality of different methods, we generate new test data $Y^{\text{test}} = \mu + \epsilon^{\text{test}}$, where $\epsilon^{\text{test}}_i \overset{i.i.d.}{\sim} D$ is a new vector of independently drawn random noise. Then, for each sample we compute its predicted response value
    $\widehat{\mu}(X_i)=\sum_{R \in \overline{R}} \bar{Y}_R 1_{(X_i \in R)},$ and the test MSE, i.e., $ \sum_{i=1}^n (Y^{\text{test}}_{i} - \widehat{\mu}(X_i))^2/n$.
\end{enumerate}

\noindent \textbf{Results under varying Gaussian noise scales}. \ To evaluate the performance of the proposed method compared to the baseline methods under varying signal strengths, we fix $p=10$ and vary the noise distribution $D = \mathcal{N}(0, \sigma^2_G)$ for $\sigma_G \in \{1,2,5,10\}$ and employ our proposed method with the randomization sd parameter $\tau_P=\tau = c*\sigma_G$ for $c = 1$, i.e., RRT(1). We compare the empirical performance on the same simulated dataset over 500 simulations of our proposal with that of  the Tree-Values method and the UV method described in Section \ref{sec:intro}.
The resulting coverage rates, average confidence interval lengths, and test MSE are presented in Figure \ref{fig:vary_signal_G}. 

While all three methods approximately achieve the targeted coverage rate of 90\%, the RRT method produces confidence intervals that are shorter than Tree-Values intervals by orders of magnitude.
Additionally, our intervals are generally shorter than UV intervals across all settings, except in the case of the signal setting with the highest noise, i.e., $\sigma_G =10$.
Since there is almost no residual information in the data used for selection that could have been utilized for inference, our intervals are slightly longer than the UV intervals in this setting.
Furthermore, the proposed method results in favorable test MSE performance compared to the two baseline methods. See Appendix~\ref{app:addsims} for an additional  comparison that demonstrates these conclusions are similar under misspecification of the error distribution.

\begin{figure}[h!]
    \centering
    \includegraphics[width=\linewidth]{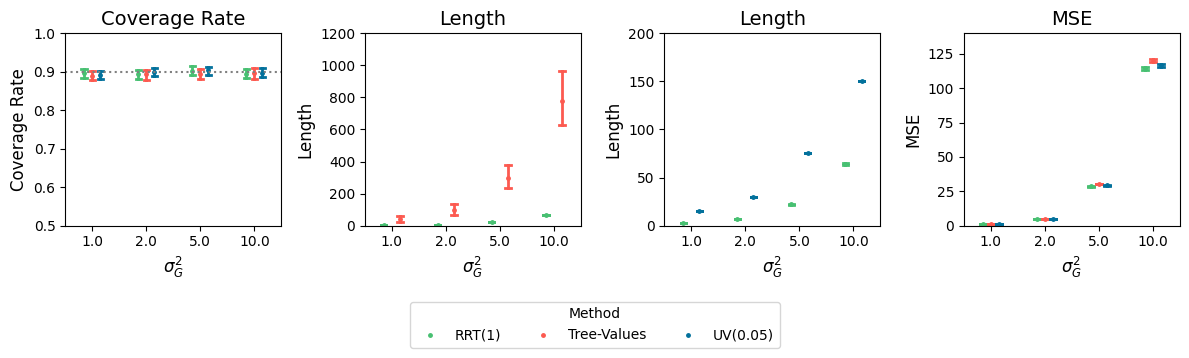}
    \captionsetup{width=\textwidth}
    \caption{Coverage rate, average CI length, and prediction MSE of RRT(1), Tree-Values, and the UV method for $\sigma^2_G \in \{1,2,5,10\}$ and Gaussian noise; {The dotted line is plotted at 0.9 in the coverage plot}}
    \label{fig:vary_signal_G}
    \vspace{-0.5cm}
\end{figure}

\noindent \textbf{Results under varying dimensions}. \
We generate data similar to the previous simulation. 
To evaluate the performance of the proposed method compared to the baseline methods under dense/sparse signals, we vary the number of covariates $p \in \{5, 10, 20\}$, corresponding to 2, 7, 17 noise variables, respectively.
\begin{figure}[h]
    \centering
    \includegraphics[width=\linewidth]{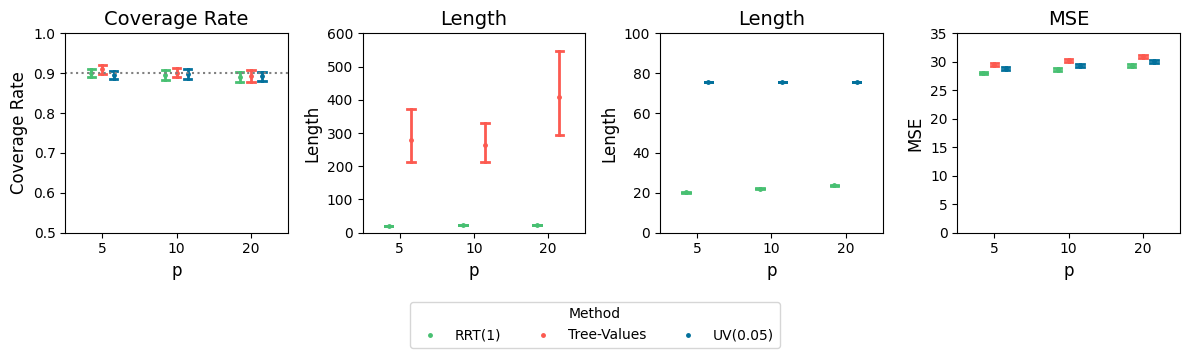}
    \captionsetup{width=\textwidth}
    \caption{Coverage rate, average CI length, and prediction MSE of RRT(1), UV method, and Tree-Values for $p \in \{5, 10, 20\}$; {The dotted line is plotted at 0.9 in the coverage plot}}
    \label{fig:vary_p}
    \vspace{-0.5cm}
\end{figure}
Similarly, in Figure \ref{fig:vary_p}, while all three methods approximately achieve the targeted coverage rate of 90\%, the proposed method yields intervals that are much shorter than Tree-Values intervals and the UV intervals on average.
An important takeaway from these results is that our method does not compromise predictive performance to inferential power, as reflected in the test MSE comparison.

Codes for reproducing the simulation results in our paper can be found at \url{https://github.com/yiling-h/SI-CART}.

\subsection{Case study: PROMPT}
\label{sec:case study} 
The \emph{PROviding Mental health Precision Treatment (PROMPT) Precision Health Study} is a 12-month mobile health intervention trial focused on reducing the burden of depression by augmenting standard mental health care using mobile health technologies to support patients on the wait list for traditional care. Adult patients (age 18+) who have a scheduled mental health intake appointments at either Michigan Medicine Outpatient Psychiatry or University Health Service clinics were eligible for participation. 
Recruited patients entered study at least 2 weeks prior to their initial clinic appointment.  
Participants were asked to complete surveys throughout the study, including an initial intake survey and a 6-week follow-up survey.
Each study participant received a Fitbit to wear daily for the duration of their time in the study. 


Here, we model the effects of the initial intake survey and wearable device data collected over the first study month on a measure of depression severity reported at the 6-week follow-up survey known as the Patient Health Questionnaire 9 (PHQ-9).
To predict the PHQ-9, we compute summary statistics, such as means and standard deviations, of 15 daily variables. Additionally, we aggregated individual patient responses to each of 9 different intake surveys, including the intake PHQ-9 and the intake General Anxiety Disorder (GAD-7) to create severity scores for each of these 9 surveys. After dropping variables with high missing rates (20\%), such as variables requiring consistent user input, we include 12 (summarized) sensor variables and 9 intake survey variables in the tree model. The final list of variables is included in Table \ref{table:var_names} from Appendix \ref{Appendix_PROMPT}. Our final dataset consists of $N=500$ patients with $21$ sensor and intake survey variables as predictors, and the PHQ-9 severity score as the response.

\noindent \textbf{Subsampling \& methods comparison}.
To compare the predictive and inferential power of the proposed RRT method with existing methods, we include Tree-Values and the UV method as baseline methods. For predictive power comparisons, we first split the dataset into a train set with $N_{\text{train}} = 300$ samples and a test set with $N_{\text{test}} = 200$ samples. Furthermore, for a comparison of all methods in a more realistic scientific setting with incoming data streams, we further subsample the train data with $50\%$ and $75\%$ samples of the full training set, with $N_{\text{train}50\%} = 150$ and $N_{\text{train}75\%} = 225$ samples, respectively.

All three methods are first fitted on the three datasets with $50\%$, $75\%$, and $100\%$ samples of the full data, with the maximum depth of the final tree as $4$, the minimum number of samples in a node to be split further as $50$, the minimum size of terminal nodes as $20$, and leave the grown tree unpruned after the stopping criterion is met. We set $\tau_P=\tau = c*\widehat\sigma$ for $c = 1$ for the RRT method, i.e., RRT($1$) and $\gamma = 0.05$ for the UV method, i.e., UV($0.05$), where $\widehat\sigma$ is the sample estimate for $\sigma$. The confidence intervals for the means of the terminal regions are computed after model fitting. Finally, the predictive performance of fitted models is evaluated on the holdout testing set, measured by the MSE.

\noindent \textbf{Empirical results and findings}.
Table \ref{table:CI_PROMPT} summarizes the average lengths of confidence intervals for the mean of terminal nodes given by the three different methods under different subsampling proportions of the training dataset. Consistent with the observations in the simulation study, the proposed method produces the shortest intervals on average (see lengths highlighted in bold). In particular, the Tree-Values method can produce much longer confidence intervals given the small sample sizes.

\begin{table}[H]
  \centering
  \begin{minipage}{0.4\textwidth}
    \centering
\begin{tabular}{@{}cccc@{}}
\toprule
Proportion  & 50\%    & 75\%   & 100\% \\ \midrule
Tree-Values & 12.588  & 80.500 & 20.474 \\
UV($0.05$)  & 10.745  & 12.510 & 12.017 \\
\textbf{RRT}($\bm{1}$)    & \textbf{6.822}   & \textbf{7.869} & \textbf{6.158} \\ \bottomrule
\end{tabular}
\caption{Average CI lengths for different proportions of training data}
\label{table:CI_PROMPT}
\end{minipage}
  \hfill
\begin{minipage}{0.45\textwidth}
    \centering
\begin{tabular}{@{}cccc@{}}
\toprule
Proportion  & 50\%    & 75\%    & 100\%   \\ \midrule
Tree-Values & 29.812 & 29.204 & 29.649 \\
UV($0.05$)  & 30.685 & 30.545 & 28.872 \\
\textbf{RRT}($\bm{1}$)    & \textbf{28.809} & \textbf{28.159} & \textbf{27.543} \\ \bottomrule
\end{tabular}
\caption{Test MSE for different proportions of training data}
\label{table:MSE_PROMPT}
\end{minipage}
\end{table}

Second, Table \ref{table:MSE_PROMPT} shows all three methods have similar MSE when validated on the test set, with RRT($1$) yielding the lowest MSE, highlighted in bold. The improved predictive power observed with added noise may be attributed to a regularization effect introduced by external randomization in this case. Ultimately, our finding echoes the observations in the simulation study and confirms that the proposed randomized procedure does not compromise the predictive power for a more powerful inference.

Finally, the tree fitted on the full training set using RRT is provided in Figure \ref{fig:PROMPT_TREE}. The intervals shown in the terminal nodes are the confidence intervals for the corresponding node means. 

Existing literature uses the following criteria to impute depression severity levels based on PHQ scores: 
minimal (0–4), mild (5–9), moderate (10–14), moderately severe (15–19), and severe (20–27), \cite{kroenke2001phq}. Based on the confidence intervals generated for the terminal nodes, we estimated that patients with high intake PHQ scores (PHQ\_B $>$ 25) fall into the range of mild-severe depression levels (node 9). For patients whose intake PHQ scores are within the range of 18 to 25, those having higher basal metabolic rates (CaloriesBMR $\leq 1325.46$) are estimated to have moderate-severe depression, while those having lower basal metabolic rates are estimated to have only mild-moderately severe depression (node 7 \& 8). In contrast, for patients with lower intake PHQ scores (PHQ\_B $\leq$ 18), the fitted tree uses other intake survey scores to predict the 6-week PHQ score. For these patients, none of the corresponding terminal nodes (nodes 1-6) falls in the range of severe 6-week depression based on the confidence intervals generated for these nodes.

\section{Acknowledgements}
We acknowledge the PROviding Mental health Precision Treatment (PROMPT) Precision Health Study at the University of Michigan for granting us access to the PROMPT data. Details of the study can be found at 
\url{https://um-prompt.wixsite.com/prompt}.
We gratefully acknowledge the commentary and feedback of Anna Neufeld, when an early version of this work was presented at the International Seminar on Selective Inference.

\bibliography{bibliography}

\newpage

\phantomsection\label{supplementary-material}
\bigskip

\begin{center}

{\large\bf SUPPLEMENTARY MATERIAL}

\end{center}

The article is self-contained; the Supplementary Material offers extended technical details and empirical results for interested readers. The Supplementary Material includes: full proofs of the main results (\ref{app:proofs}); a simplified overview of conditional inference after a single split (\ref{app:onesplit}); inference procedures under adaptive stopping rules (\ref{app:adaptive}); additional simulations, figures, and examples (\ref{app:addsims}); and further details of the \textsc{PROMPT} study (\ref{Appendix_PROMPT}).

\setcounter{section}{0}
\renewcommand{\thesection}{Supplement~\Alph{section}}
\renewcommand{\thesubsection}{Supplement~\Alph{section}.\arabic{subsection}}

\section{Simplified Overview of Conditional Inference after Single Split}
\label{app:onesplit}
Short description of Supplement A.
Before developing our inferential framework for RRT, we first illustrate our main idea of inference using a simple one-depth tree. 
We motivate our novel randomization approach and demonstrate how it enables easily tractable conditional inference through this simple example.
We consider the TREE-model in \eqref{tree:model} with exactly two terminal regions, $\cR_{1} = \cR$ and $\cR_{2} = \cR^c$, which are the result of a single split generated by $\Tree(Y, \bW)$. Here, $\bW= \bW_{1}$ denotes the randomization variables added to the gain functions at this first split. 
As before, we consider inference for $\mu_{\cR}$, where $\cR \in \{\cR_{1}, \cR_{2}\}$. 

Fixing some more notations, let $\rchi_{\cR,1}= \rchi_{1}$ denote the set of possible splits for the parent region $P_{1}=\mathbb{R}^p$ and let $|\rchi_{{\cR,1}}|=d_1$.
Let $S^{*}_{\cR,1}= S^{*}_{\cR,1}(Y, \bW_1)$, a function of both $Y$ and $\bW_1$, denote the first random split selected from this set that resulted in the region $\cR$.
Say that we observe the event $\{S_{\cR, 1}^{*} = s^{*}_{\cR, 1}\}$.
In this case, a pivot is obtained from the  conditional distribution of
\begin{equation}
\nu_{\cR}^\top Y  \mid P_{\nu_{\cR}}^{\perp}Y = P_{\nu_{\cR}}^{\perp}y, \bS = \bs,
\label{target:cond:dist}
\end{equation} where the conditioning event is   $\left\{\bS= \bs\right\}= \left\{S^{*}_{\cR,1}=s^{*}_{\cR, 1}\right\}.$
In the rest of this section, we guide our readers through the main steps of characterizing this conditional distribution.

Observe that we can express
$Y = \dfrac{\nu}{||\nu||_{2}^{2}} (\nu^{\top}Y) + P_{\nu}^{\perp}Y,$
where $\nu^{\top}Y$ and $P_{\nu}^\perp Y$ are independent variables, i.e., $\nu^{\top}Y\indep P_{\nu}^\perp Y$.
Let $\nu = \nu_{\cR}$, where $\nu_{\cR}$ is defined in the previous section.
As a result, the random split $S_{\cR, 1}^{*}$, made on the parent region $P_{1}$ is fully determined by the data variables $\nu_{\cR}^{\top}Y$, $P_{\nu_{\cR}}^{\perp}Y$, and the randomization variables $\bW_{1}$, i.e.,
$S_{\cR, 1}^{*} = S_{\cR, 1}^{*}(\nu_{\cR}^{\top}Y, P_{\nu_{\cR}}^{\perp}Y, \bW_{1})$.

With details deferred to the next section, it follows that the conditional distribution of interest has a density at
$t \in \mathbb{R}$ proportional to:
$$
\phi \nbracket{t; \nu_{\cR}^\top\mu, \sigma^{2}||\nu_{\cR}||^2_2 }\times  \mathbb{P}\rbracket{S_{\cR, 1}^{*}  = s^{*}_{\cR, 1} \mid P_{\nu_{\cR}}^{\perp}Y = P_{\nu_{\cR}}^{\perp}y, \nu_{\cR}^\top Y  = t},
$$
where $\nu_{\cR}^\top\mu =\sqrt{n_{\cR}}\mu_{\cR}$ and $\phi(t; \mu, \gamma^2)$ denotes a normal density with mean parameter $\mu$ and variance $\gamma^2$.
In the above-stated conditional density, the second term represents the probability of observing the split $s^{*}_{\cR, 1}$, conditional on the data variables. 
When combined with the na{\"i}ve density of $\nu_{\cR}^\top Y$, this yields the conditional density that provides valid inference in the RRT.
Computing this probability is central to the new inference approach, and due to the additive form of our randomization, we can easily derive an exact expression for it.

Below, we outline the main two steps for computing this probability, providing a   formal derivation of its expression in the next section.

\noindent \textbf{Step 1}.
First, observe that our event equals
\begin{equation*}
 \begin{aligned}
    \cbracket{S^{*}_{\cR,1}= s^{*}_{\cR,1}} 
     =& \Big\{W_{1}(s^{*}_{\cR,1}) - W_{1}(s) \geq  G(Y;P_{1},s) - G(Y;P_{1},  s^{*}_{\cR,1}),\ \forall s \in \rchi_{\cR,1}\setminus \{s^{*}_{\cR,1}\}\Big\}.
 \end{aligned}
 \end{equation*} 

\noindent \textbf{Step 2}.
Letting $y(t)=  t\dfrac{\nu_{\cR}}{||\nu_{\cR}||_{2}^{2}} + P_{\nu_{\cR}}^{\perp}y$ and taking probabilities of the two equivalent events in Step 1, we have
 \begin{align*}
& \mathbb{P}\rbracket{S^{*}_{\cR,1}= s^{*}_{\cR,1} \mid \nu_{\cR}^{\top}Y =t, P_{\nu_{\cR}}^{\perp}Y=P_{\nu_{\cR}}^{\perp}y}\\
     =&  \mathbb{P}\Big[W_{1}(s^{*}_{\cR,1}) - W_{1}(s) \geq  G(Y;P_{1},s) - G(Y;P_{1},  s^{*}_{\cR,1}), \\
     & \;\;\;\;\;\;\;\;\;\;\;\;\;\;\;\;\;\;\;\;\;\;\;\;\;\;\;\;\;\;\;\;\;\;\;\;\;\;\;\;\;\;\;\;\;\;\;\;\;\;\;\;\;\;\;\;\;\;\;\;\; \forall s \in \rchi_{\cR,1} \setminus \{s^{*}_{\cR,1}\} \mid \nu_{\cR}^{\top}Y =t, P_{\nu_{\cR}}^{\perp}Y=P_{\nu_{\cR}}^{\perp}y \Big] \\
      =&\mathbb{P}\Big[W_{1}(s^{*}_{\cR,1}) - W_{1}(s) \geq  G(y(t);P_{1},s) - G(y(t);P_{1},  s^{*}_{\cR,1}),  \forall s \in \rchi_{\cR,1} \setminus \{s^{*}_{\cR,1}\} \Big].
 \end{align*}
The second equality uses the independence of the randomization variables from $Y$, and consequently, their independence from both $\nu_{\cR}^{\top}Y$ and  $P_{\nu_{\cR}}^{\perp}Y$.  Therefore, as long as we know the distribution of the differences $\{W_{1}(s^{*}_{\cR,1}) - W_{1}(s): \ s\in \rchi_{\cR,1} \setminus \{s^{*}_{\cR,1}\}\}$, this probability is straightforward to compute. 
This is indeed the case, as we demonstrate in the next section. Due to the normal distribution of the randomization variables, this probability simplifies to an integral based on the Gaussian density of these randomization variables.

\section{Proofs of main results}
\label{app:proofs}
\begin{proof}[Proof of Lemma \ref{L1}]
 Observe that
 \begin{equation*}
 \begin{aligned}
 & \mathbb{P}\rbracket{\bS = \bs \mid P_{\nu_{\cR}}^{\perp}Y =   P_{\nu_{\cR}}^{\perp}y, \nu_{\cR}^\top Y  = t }   \\
  =&\mathbb{P}\rbracket{\SR = \sR, \SRc=\sRc \mid P_{\nu_{\cR}}^{\perp}Y =   P_{\nu_{\cR}}^{\perp}y, \nu_{\cR}^\top Y  = t }   \\
   =&\underbrace{\mathbb{P}\rbracket{\SR = \sR \mid \SRc=\sRc , P_{\nu_{\cR}}^{\perp}Y =   P_{\nu_{\cR}}^{\perp}y, \nu_{\cR}^\top Y  = t }}_{\text{(P1)}}   \underbrace{\mathbb{P}\rbracket{\SRc=\sRc \mid P_{\nu_{\cR}}^{\perp}Y =   P_{\nu_{\cR}}^{\perp}y, \nu_{\cR}^\top Y  = t }}_{\text{(P2)}}.
  \end{aligned}
  \end{equation*}
  To prove our claim, we show the following
  \begin{align*}
  \begin{gathered}
  \text{(P1)}= \mathbb{P}\rbracket{\SR = \sR \mid P_{\nu_{\cR}}^{\perp}Y =   P_{\nu_{\cR}}^{\perp}y, \nu_{\cR}^\top Y  = t},  \\
   \text{(P2)}=  \mathbb{P}\rbracket{\SRc=\sRc \mid P_{\nu_{\cR}}^{\perp}Y =   P_{\nu_{\cR}}^{\perp}y}.
   \end{gathered}
  \end{align*}
This leads us to conclude that: 
 \begin{equation*}
 \begin{aligned}
 & \mathbb{P}\rbracket{\bS = \bs \mid P_{\nu_{\cR}}^{\perp}Y =   P_{\nu_{\cR}}^{\perp}y, \nu_{\cR}^\top Y  = t }  \propto  \mathbb{P}\rbracket{\SR = \sR \mid P_{\nu_{\cR}}^{\perp}Y =   P_{\nu_{\cR}}^{\perp}y, \nu_{\cR}^\top Y  = t}.
  \end{aligned}
  \end{equation*}

Firstly, note that any split $S^{*}_{j} \in \bS$ made on $P_{j} \in \bP$ is a function of $(\{Y_{i}: X_{i} \in P_{j}\}, \bW_{j})$, which are the observations whose covariates fall within $P_j$ and the independent randomization variable that was added to the gain functions while selecting this split.
Define 
$$Y(t)=  t\dfrac{\nu}{||\nu||_{2}^{2}} + P_{\nu_{\cR}}^{\perp}Y.$$  
Then for a pair of splits $(S^{*}_{j},S^{*}_{j'}) \in \bS \times \bS$, such that $j\neq j'$, we have that 
$S^{*}_{j} \indep S^{*}_{j'} \mid Y(t) = y(t)$, 
since $\bW_{j} \indep \bW_{j'}$ for splits made on distinct parent regions. 
As a result, we also have that
 $$S^{*}_{j} \indep S^{*}_{j'} \mid \cbracket{P_{\nu_{\cR}}^{\perp}Y =   P_{\nu_{\cR}}^{\perp}y, \nu_{\cR}^\top Y  = t},$$
and that $\SR \indep \SRc | \cbracket{P_{\nu_{\cR}}^{\perp}Y =   P_{\nu_{\cR}}^{\perp}y, \nu_{\cR}^\top Y  = t}$.
This proves the claim about $\text{(P1)}$.

\bigskip
Now, for any $S^{*}_{j} \in \SRc$ made on $P_{j} \in \bP$, we have that $\cR \cap P_{j} = \emptyset$.
For such a $P_{j}$ with a disjoint intersection with $\cR$, given that $\bW \indep Y$ and that $\{Y_{i}:  X_{i}\in P_j\} \indep \nu_{\cR}^\top Y$,  we have
$$S^{*}_{j} \indep \nu_{\cR}^\top Y.$$
This proves our claim about $\text{(P2)}$ immediately.
 \end{proof}



\begin{proof}[Proof of Proposition \ref{prop:1}]
The conditional density of $\nu_{\cR}^\top Y \mid \cbracket{\bS = \bs, P_{\nu_{\cR}}^{\perp}Y = P_{\nu_{\cR}}^{\perp}y}$ at $t$ is proportional to
 $$
 \phi \nbracket{t; \nu_{\cR}^\top\mu, \sigma^{2}||\nu_{\cR}||^2_2 }\times  \mathbb{P}\rbracket{\bS(Y, \bW) = \bs \mid \nu_{\cR}^\top Y  = t, P_{\nu_{\cR}}^{\perp}Y =   P_{\nu_{\cR}}^{\perp}y},
 $$
due to the independence between $\nu_{\cR}^\top Y$ and $P_{\nu_{\cR}}^{\perp}Y$.
Using the conclusion in Lemma \ref{L1}, we futher note that this density is proportional to 
$$
\phi \nbracket{t; \nu_{\cR}^\top\mu, \sigma^{2}||\nu_{\cR}||^2_2 } \times  \mathbb{P}\rbracket{\SR = \sR \mid P_{\nu_{\cR}}^{\perp}Y =   P_{\nu_{\cR}}^{\perp}y, \nu_{\cR}^\top Y  = t},
$$
since 
$$\mathbb{P}\rbracket{\bS = \bs \mid \nu_{\cR}^\top Y  = t, P_{\nu_{\cR}}^{\perp}Y =   P_{\nu_{\cR}}^{\perp}y} 
     \propto  \mathbb{P}\rbracket{\SR = \sR \mid \nu_{\cR}^\top Y  = t, P_{\nu_{\cR}}^{\perp}Y =   P_{\nu_{\cR}}^{\perp}y}.$$
This leads to the claimed conditional density.
 \end{proof}

\begin{proof}[Proof of Theorem  \ref{pivot:1}]
Given the conditional density of $\nu_{\cR}^\top Y \mid \cbracket{\bS = \bs, P_{\nu_{\cR}}^{\perp}Y = P_{\nu_{\cR}}^{\perp}y}$ in Proposition \ref{prop:cond:density}, we apply the probability integral transform (PIT) to obtain the pivot
$$P(\nu_{\cR}^\top y,P_{\nu_{\cR}}^{\perp}y; \nu_{\cR}^\top \mu) = \frac{\int_{-\infty}^{\nu_{R}^\top y} \phi(t; \nu_{\cR}^\top \mu, \sigma^2||\nu_{\cR}||^2_2) \Lambda_{\sR} (t, P_{\nu_{\cR}}^{\perp}y) dt}{\int_{-\infty}^{+\infty}\phi(t; \nu_{\cR}^\top \mu, \sigma^2||\nu_{\cR}||^2_2) \Lambda_{\sR} (t, P_{\nu_{\cR}}^{\perp}y)dt}.$$
Furthermore, because of its construct using the PIT, it follows directly that the pivot is distributed as a $\text{Unif}(0,1)$ random variable.
 \end{proof}

\section{Pivot under adaptive stopping rules}
\label{app:adaptive}
Although the primary inferential results in Section \ref{Sec:mainresults} were developed for fixed-depth trees or more generally, trees grown using deterministic stopping rules, we demonstrate here that our RRT method  can be extended to accommodate other TREE-models grown with adaptive stopping rules. 
This is achieved by making slight modifications to the standard rules with external randomization. 
Specifically, we discuss randomized variants of two commonly used adaptive stopping rules. 
For both types of adaptively grown TREE-models, we show how inference for the data-dependent model parameters can be conducted using a similar approach as developed for the fixed-depth trees in Section \ref{Sec:mainresults}.

\noindent\textbf{Rule based on thresholding the gain function}.
Consider growing the classic regression tree, $TREE^{\lambda}(Y;X)$, without external randomization, in the following manner.
At a region $P_k\subset \mathbb{R}^{p}$, we first select the split $s_k^{*} \in \rchi_{P_k}$ that maximizes the gain function $G(Y; P_k, s)$; but, the split is made only if the gain $G(Y; P_k, s_k^{*})$ exceeds a pre-specified threshold $\lambda$; otherwise, we stop.

To grow $TREE^{\lambda}(Y, \bW;X)$ with added randomization variables, we apply the same rule, but this time to the randomized gain functions rather than the standard gain function.
More precisely, at the parent region $P_k\subset \mathbb{R}^{p}$, we observe 
$
\cbracket{S^{*}_{k}\nbracket{Y, \bW_{k}; P_{k}} = s^{*}_{k}} 
$
if and only if 
\begin{equation}
\label{sel:event:ASR}
\left\{s^{*}_{k} = \underset{s\in \rchi_{k}}{\argmax}\; G(Y; P_{k}, s)  + W_k(s)\right\} \cap \Big\{G(Y; P_{k}, s^{*}_{k})  +W_{k}(s^{*}_{k})\geq \lambda\Big\}
\end{equation}
where $\bW_{k} = \Big\{W_{k}(s),\; s\in \chi_k\Big\}$ are randomization variables drawn independently from $\mathcal{N}(0, \tau_{P_k}^2 )$ and also independent of $Y$.
Algorithm \ref{alg:RRTa1} summarizes the steps to build an adaptively grown TREE-model using this randomized thresholding rule. 

To address inference with this adaptive stopping rule, note that the event in \eqref{sel:event:ASR}
is equivalent to
\begin{align*}
 \Big\{W_{k}(s^{*}_{k}) - W_{k}(s)  & \geq  G(Y;P_{k},s) - G(Y;P_{k},  s^{*}_{k}),\ \forall s \in \rchi_{k}\setminus \{s^{*}_{\cR,1}\}\Big\} \\
&\;\;\;\;\;\;\;\;\;\;\;\;\;\;\;\;\;\;\;\;\;\;\;\;\;\;\;\;\;\;\;\;\;\;\;\;\;\;\;\;\;\;\cap \Big\{ W_{k}(s^{*}_{k}) \geq \lambda - G(Y; P_{k}, s^{*}_{k})\Big\}.
\end{align*}

For the series of $L$ selected splits $\sR=\cbracket{s^{*}_{\cR,1}, s^{*}_{\cR,2}, \ldots, s^{*}_{\cR,L}}$ in the sub-tree leading to the terminal region $\cR$, define for each $l \in [L]$:
\begin{align*}
O^{(1)}_{\sRl}(t, P_{\nu_{\cR}}^{\perp}y)&=\nbracket{G\nbracket{y(t) ;P_{\cR,l}, \sRl}-\lambda, \beta_{\sRl}(t, P_{\nu_{\cR}}^{\perp}y)}^{\top} 
\end{align*}
where $y(t)$ and $\beta_{\sRl}(t, P_{\nu_{\cR}}^{\perp}y)$ are as defined before.

Then, for $\widetilde{\Omega}^{(1)}_{\cR, l} = \begin{bmatrix}
   \tau^2_{P_{\cR,l}} & \tau^2_{P_{\cR,l}} \cdot 1_{d_{l-1}}^{\top}\\
   \tau^2_{P_{\cR,l}} \cdot 1_{d_{l-1}} & \Omega_{\cR, l}
\end{bmatrix}\in \mathbb{R}^{d_{l}\times d_{l}}$, we define
\begin{align*}
\widetilde{\Lambda}^{(1)}_{\sRl}(t, P_{\nu_{\cR}}^{\perp}y) &= \int_{\mathbb{R}_{+}^{d_{l}}} \phi\nbracket{u; O^{(1)}_{\sRl}(t, P_{\nu_{\cR}}^{\perp}y), \widetilde{\Omega}^{(1)}_{d_l}} du. 
\end{align*}

In Theorem \ref{thm:pivot:3}, we present our pivot for inference in the TREE-model, derived from the output of Algorithm \ref{alg:RRTa1}.
\begin{theorem}
\label{thm:pivot:3}
Given data $Y=y$, a pivot for $\nu_{\cR}^\top \mu$ equals $$P^{(1)}(\nu_{\cR}^\top y,P_{\nu_{\cR}}^{\perp}y; \nu_{\cR}^\top \mu) = \frac{\int_{-\infty}^{\nu_{R}^\top y} \phi(t; \nu_{\cR}^\top \mu, \sigma^2||\nu_{\cR}||^2_2) \widetilde{\Lambda}^{(1)}_{\sR} (t, P_{\nu_{\cR}}^{\perp}y) dt}{\int_{-\infty}^{+\infty}\phi(t; \nu_{\cR}^\top \mu, \sigma^2 ||\nu_{\cR}||^2_2) \widetilde{\Lambda}^{(1)}_{\sR} (t, P_{\nu_{\cR}}^{\perp}y)dt},$$
which, conditional on $\{\bS=\bs\}$, is distributed as a $\text{\normalfont Unif}(0,1)$ random variable.   
\end{theorem}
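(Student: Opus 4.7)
The plan is to mirror the proof of Theorem \ref{pivot:1}, modifying only the characterization of the selection event at each split to incorporate the added thresholding constraint. First, I would verify that the analog of Lemma \ref{L1} holds in this adaptive setting: the probability of $\{\bS=\bs\}$ given $(\nu_{\cR}^\top Y = t, P_{\nu_{\cR}}^\perp Y = P_{\nu_{\cR}}^\perp y)$ factors into a contribution from the subtree leading to $\cR$ (captured by $\{\SR=\sR\}$) and a contribution from splits outside this subtree that is functionally independent of $t$. The two ingredients used in the proof of Lemma \ref{L1}—(i) conditional independence of splits made on disjoint parent regions given $Y(t)$, and (ii) independence of any split made on a region disjoint from $\cR$ from $\nu_{\cR}^\top Y$—transfer verbatim, because appending the threshold event $\{W_k(s_k^*) \geq \lambda - G(Y; P_k, s_k^*)\}$ uses only the same data and randomization variables attached to $P_k$.

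Next, following Proposition \ref{prop:cond:density}, I would rewrite the event $\{\SRl = \sRl\}$ at each level $l \in [L]$ of the relevant subtree as the intersection of the optimality constraints with the additional threshold constraint, yielding $d_l$ half-space constraints on the randomization vector $\bW_{\cR,l}$. Specifically, the event becomes
\begin{equation*}
\bigl\{W_{\cR,l}(\sRl) \geq \lambda - G(Y; P_{\cR,l}, \sRl)\bigr\}\cap \bigcap_{s\in \rchi_{\cR,l}\setminus\{\sRl\}}\!\!\!\bigl\{W_{\cR,l}(\sRl) - W_{\cR,l}(s) \geq G(Y;P_{\cR,l},s) - G(Y;P_{\cR,l}, \sRl)\bigr\}.
\end{equation*}
Conditioning on $\nu_{\cR}^\top Y = t$ and $P_{\nu_{\cR}}^\perp Y = P_{\nu_{\cR}}^\perp y$, the right-hand sides become the deterministic vector $O^{(1)}_{\sRl}(t, P_{\nu_{\cR}}^\perp y)$. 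Thus, computing the conditional probability reduces to evaluating the probability that the $d_l$-dimensional Gaussian vector
$$\bigl(W_{\cR,l}(\sRl),\; W_{\cR,l}(\sRl) - W_{\cR,l}(s):\; s \in \rchi_{\cR,l}\setminus\{\sRl\}\bigr)$$
lies in the positive orthant shifted by $O^{(1)}_{\sRl}(t, P_{\nu_{\cR}}^\perp y)$. A direct covariance computation using $W_{\cR,l}(s)\stackrel{\text{iid}}{\sim}\mathcal{N}(0,\tau^2_{P_{\cR,l}})$ gives the covariance matrix $\widetilde{\Omega}^{(1)}_{\cR,l}$ stated in the theorem—the $(1,1)$ entry equals $\tau^2_{P_{\cR,l}}$, the off-diagonal entries of the first row and column equal $\tau^2_{P_{\cR,l}}$, and the trailing block is precisely $\Omega_{\cR,l}$ from Proposition \ref{prop:cond:density}. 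This yields $\mathbb{P}(\SRl = \sRl \mid \nu_{\cR}^\top Y = t, P_{\nu_{\cR}}^\perp Y = P_{\nu_{\cR}}^\perp y) = \widetilde{\Lambda}^{(1)}_{\sRl}(t, P_{\nu_{\cR}}^\perp y)$.

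Taking products across $l \in [L]$ by the cross-level conditional independence argument from Proposition \ref{prop:cond:density}, and combining with the marginal normal density of $\nu_{\cR}^\top Y$ as in Proposition \ref{prop:1}, gives the conditional density of $\nu_{\cR}^\top Y$ given $\{\bS=\bs, P_{\nu_{\cR}}^\perp Y = P_{\nu_{\cR}}^\perp y\}$ as proportional to $\phi(t; \nu_{\cR}^\top\mu, \sigma^2\|\nu_{\cR}\|_2^2)\,\widetilde{\Lambda}^{(1)}_{\sR}(t, P_{\nu_{\cR}}^\perp y)$. The probability integral transform then delivers the pivot $P^{(1)}$ as uniformly distributed on $(0,1)$ given $\{\bS=\bs\}$.

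The main obstacle is bookkeeping: correctly identifying the augmented covariance $\widetilde{\Omega}^{(1)}_{\cR,l}$ and confirming that the threshold inequality plus the $d_l-1$ optimality inequalities can be jointly expressed through a single Gaussian vector whose components all share the common term $W_{\cR,l}(\sRl)$. Once this linear-algebraic packaging is in place, the rest of the proof is a transcription of the argument for Theorem \ref{pivot:1} with $d_l$ replacing $d_l-1$ and with the mean vector and covariance matrix appropriately augmented.
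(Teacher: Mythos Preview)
Your proposal is correct and follows essentially the same route as the paper's proof: reduce to the subtree leading to $\cR$ via Lemma \ref{L1}, package the threshold constraint together with the $d_l-1$ optimality constraints into a single $d_l$-dimensional Gaussian vector with mean $O^{(1)}_{\sRl}(t, P_{\nu_{\cR}}^\perp y)$ and covariance $\widetilde{\Omega}^{(1)}_{\cR,l}$, multiply across levels, and apply the probability integral transform. The paper records this vector as $Z^{(1)}_{\cR,l}=(G(Y;P_{\cR,l},\sRl)+W_{\cR,l}(\sRl)-\lambda,\,D_{\cR,l})^\top$ and simply notes its conditional Gaussian law, which is exactly the linear-algebraic packaging you describe.
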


The of Theorem  \ref{thm:pivot:3} closely resembles the proof for deriving the pivot related to the adaptive stopping rule based on cost complexity, which we will discuss next. Therefore, we will provide the proof for both of these pivots in the following section.


\begin{algorithm}[h]
\caption{Randomized CART: tree growing algorithm with adaptive stopping (1)}
\label{alg:RRTa1}
\KwIn{Training data $(X, Y)$, maximum depth $d_{\max}$, pruning parameter $\lambda$, noise variance $\tau_P^2$}
\KwOut{Set of terminal regions $\{\cR_1, \cR_2, \ldots, \cR_K\}$ forming the fitted adaptive randomized tree}

Initialize root region $P_1 \gets \mathbb{R}^p$, current depth $d \gets 0$\;

\While{$d < d_{\max}$}{
    \ForEach{region $P$ at depth $d$ with $n_P > 1$}{
        Compute impurity gain $G(Y; P, s)$ for all $s = (j, o) \in \chi_P$\;
        Draw random perturbations $W(s) \sim \mathcal{N}(0, \tau_P^2)$ for all $s \in \chi_P$\;
        Select randomized best split 
        $$s^* \gets \arg\max_{s \in \chi_P} \{ G(Y; P, s) + W(s) \}\ ;$$
        
        \eIf{$G(Y; P, s^*) + W(s^*) \ge \lambda$}{
            Partition region $P$ using $s^*$ into left and right children $\{P^{l}_{s^*}, P^{g}_{s^*}\}$\;
            Increment tree depth: $d \gets d + 1$\;
        }{
            Return $P$ as a terminal region\;
        }
    }
}
Return all terminal regions $\{P\}$ as $\{\cR_1, \ldots, \cR_K\}$\;
\end{algorithm}

\noindent\textbf{Rule based on a cost complexity criterion}.
Another commonly used adaptive stopping rule is based on a cost-complexity strategy, which in the standard CART is also employed for bottom-up pruning. 
This approach removes all descendants of a region $P$ if the gain in the sum of squared errors based on the terminal regions of $P$  is less than a pre-specified, cost-complexity threshold $\lambda$.
That is, let $\operatorname{TERM}(P)$ be the set of all terminal nodes with $P$ as parent.
The decision to not further split region $P$ is made if the average gain in MSE, defined as 
$$\text{GM}(Y; P) = \frac{\sum_{i: X_i \in P}\left(Y_i-\overline{Y}_P\right)^2-\sum_{R \in \operatorname{TERM}(P)} \sum_{i: X_i \in R}\left(Y_i-\overline{Y}_R\right)^2}{|\operatorname{TERM}(P)|-1}$$
does not exceed the complexity threshold $\lambda$.
The RRT can incorporate a similar adaptive stopping rule with external randomization.

Similar to the previous rule, we modify the splitting criterion by using a randomized version of the function $\text{GM}(Y; P)$.
Consider the parent region $P_k\subset \mathbb{R}^{p}$ in our RRT. 
First, we apply the standard CART algorithm with a fixed depth $d_0$, using $P_k$ as the root node, and compute $\text{GM}(Y; P_k)$, based on the terminal regions of the CART output.
Then, in our RRT, we observe 
$
\cbracket{S^{*}_{k}\nbracket{Y, \bW_{k}; P_{k}} = s^{*}_{k}} 
$
if and only if 
\begin{equation}
\label{sel:event:CCP}
\Big\{s^{*}_{k} = \underset{s\in \rchi_{k}}{\argmax} \; G(Y; P_{k}, s)  + W_{k}(s)\Big\} \cap \cbracket{\text{GM}(Y; P_k)  + \widetilde{W}_k \geq \lambda},
\end{equation}
where
$\bW_{k} = \Big\{W_{k}(s),\; s\in \chi_k\Big\} \cup  \left\{\widetilde{W}_k\right\}$ 
are $d_{k}+1$ data independent randomization terms drawn from $\mathcal{N}(0, \tau_{P_k}^2 )$. 
This adaptive variant of the RRT is outlined in Algorithm \ref{alg:RRTa2}.
Obviously, the event in \eqref{sel:event:CCP} is equivalent to 
\begin{align*}
\Big\{W_{k}(s^{*}_{k}) - W_{k}(s) \geq  & G(Y;P_{k},s) - G(Y;P_{k},  s^{*}_{k}),\quad \forall s \in \rchi_{k}\setminus \{s^{*}_{k}\}\Big\} \\
&\;\;\;\;\;\; \cap \cbracket{\text{GM}(Y; P_k)  + \widetilde{W}_{k} \geq \lambda}.
\end{align*}

\begin{algorithm}[h]
\caption{Randomized CART: tree growing algorithm with adaptive stopping (2)}
\label{alg:RRTa2}
\KwIn{Training data $(X, Y)$, maximum depth $d_{\max}$, pruning parameter $\lambda$, noise variance $\tau_P^2$}
\KwOut{Set of terminal regions $\{\cR_1, \cR_2, \ldots, \cR_K\}$ forming the fitted adaptive randomized tree}

Initialize root region $P_1 \gets \mathbb{R}^p$, current depth $d \gets 0$\;

\While{$d < d_{\max}$}{
    \ForEach{region $P$ at depth $d$ with $n_P > 1$}{
        Draw independent random perturbations 
        $$\{W(s): s \in \chi_P\}, \ \widetilde{W} \sim \mathcal{N}(0, \tau_P^2)\ ;$$
        Compute impurity gain $G(Y; P, s)$ for all $s = (j, o) \in \chi_P$\;
        Select randomized best split 
        $$s^* \gets \arg\max_{s \in \chi_P} \{ G(Y; P, s) + W(s) \}\ ;$$
        Grow a full CART subtree of depth $d_{\max}$ from $P$ to obtain $\operatorname{TERM}(P)$\;
        Compute the growth metric $\mathrm{GM}(Y; P)$\;
        
        \eIf{$\mathrm{GM}(Y; P) + \widetilde{W} \ge \lambda$}{
            Partition region $P$ using $s^*$ into left and right children $\{P^{l}_{s^*}, P^{g}_{s^*}\}$\;
            Increment tree depth: $d \gets d + 1$\;
        }{
            Return $P$ as a terminal region\;
        }
    }
}
Return all terminal regions $\{P\}$ as $\{\cR_1, \ldots, \cR_K\}$\;
\end{algorithm}

Consider observing $\sR=\cbracket{s^{*}_{\cR,1}, s^{*}_{\cR,2}, \ldots, s^{*}_{\cR,L}}$.
Using the notations introduced earlier, define for each $l \in [L]$:
\begin{align*}
O_{\sRl}^{(2)}(t, P_{\nu_{\cR}}^{\perp}y)&=\nbracket{\text{GM}(Y; P_{\cR, l})-\lambda, \beta_{\sRl}(t, P_{\nu_{\cR}}^{\perp}y)}^{\top}.
\end{align*}
For 
$\widetilde{\Omega}_{\cR, l}^{(2)}= \begin{bmatrix}
   \tau_{P_{\cR, l}}^2  & 0_{d_{l-1}}^{\top}\\
   0_{d_{l-1}} & \Omega_{\cR, l}
\end{bmatrix} \in \mathbb{R}^{d_{l}\times d_{l}}
$, let
\begin{align*}
\widetilde{\Lambda}^{(2)}_{\sRl}(t, P_{\nu_{\cR}}^{\perp}y) &= \int_{\mathbb{R}_{+}^{d_{l}}} \phi\nbracket{u; O^{(2)}_{\sRl}(t, P_{\nu_{\cR}}^{\perp}y), \Omega^{(2)}_{\cR, l}} du.
\end{align*} 

In Theorem \ref{thm:pivot:4}, we provide a pivot for inference in the TREE-model, based on the output of Algorithm \ref{alg:RRTa2}.
\begin{theorem}
\label{thm:pivot:4}
Given data $Y=y$, a pivot for $\nu_{\cR}^\top \mu$ equals $$P^{(2)}(\nu_{\cR}^\top y,P_{\nu_{\cR}}^{\perp}y; \nu_{\cR}^\top \mu) = \dfrac{\int_{-\infty}^{\nu_{R}^\top y} \phi(t; \nu_{\cR}^\top \mu, \sigma^2||\nu_{\cR}||^2_2) \widetilde{\Lambda}^{(2)}_{\sR} (t, P_{\nu_{\cR}}^{\perp}y) dt}{\int_{-\infty}^{+\infty}\phi(t; \nu_{\cR}^\top \mu, \sigma^2 ||\nu_{\cR}||^2_2) \widetilde{\Lambda}^{(2)}_{\sR} (t, P_{\nu_{\cR}}^{\perp}y)dt},$$
which, conditional on $\cbracket{\bS=\bs}$, is distributed as a $\text{\normalfont Unif}(0,1)$ random variable.   
\end{theorem}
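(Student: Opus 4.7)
My plan is to extend the derivation of Theorem \ref{pivot:1} by accounting for the additional, randomized stopping event at each internal region. The overall strategy mirrors what was done for the fixed-depth case: reduce the conditioning event to the sub-tree leading to $\cR$, express the resulting probability as a Gaussian orthant integral, multiply by the Gaussian marginal of $\nu_{\cR}^\top Y$, and apply the probability integral transform.

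First, I would establish an analog of Lemma \ref{L1} for Algorithm \ref{alg:RRTa2}. The augmented selection output now consists not only of $\SR$ and $\SRc$ but also of the stopping decisions at each internal region. Because (i) the randomization variables $\{\bW_{k}\}$ used at different parent regions are mutually independent and independent of $Y$, and (ii) for each $P_k$ whose interior is disjoint from $\cR$ the functions $\text{GM}(Y; P_k)$ and $G(Y; P_k, s)$ depend only on $\{Y_i : X_i \in P_k\}$, which is independent of $\nu_{\cR}^\top Y$ given $P_{\nu_{\cR}}^{\perp}Y$, the conditioning on all splits and stopping decisions outside the sub-tree leading to $\cR$ contributes only an overall normalization to the conditional density of $\nu_{\cR}^\top Y$. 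Thus the conditional density we must compute is proportional to the Gaussian marginal of $\nu_{\cR}^\top Y$ times the conditional probability, given $(\nu_{\cR}^\top Y = t,\ P_{\nu_{\cR}}^{\perp}Y = P_{\nu_{\cR}}^{\perp}y)$, of the event that for every $l \in [L]$ the optimal split at $P_{\cR,l}$ equals $\sRl$ \emph{and} $\text{GM}(Y; P_{\cR,l}) + \widetilde{W}_{\cR,l} \geq \lambda$.

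Second, I would follow the equivalent description in \eqref{event:description} to rewrite this event, at each level $l$, as
\begin{align*}
\bigl\{W_{\cR,l}(\sRl) - W_{\cR,l}(s) \geq\ & G(Y; P_{\cR,l}, s) - G(Y; P_{\cR,l}, \sRl),\ \forall s \in \rchi_{\cR,l}\setminus\{\sRl\}\bigr\} \\
 & \cap \bigl\{\widetilde{W}_{\cR,l} \geq \lambda - \text{GM}(Y; P_{\cR,l})\bigr\}.
\end{align*}
The vector $(\widetilde{W}_{\cR,l},\ W_{\cR,l}(\sRl) - W_{\cR,l}(s):\ s \in \rchi_{\cR,l}\setminus\{\sRl\})$ is jointly Gaussian, centered at zero, and independent of $Y$; because $\widetilde{W}_{\cR,l}$ is drawn independently of $\{W_{\cR,l}(s)\}_{s}$, its covariance matrix is exactly the block-diagonal matrix $\widetilde{\Omega}^{(2)}_{\cR,l}$. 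Computing the probability of the event above, conditional on $(\nu_{\cR}^\top Y = t,\ P_{\nu_{\cR}}^{\perp}Y = P_{\nu_{\cR}}^{\perp}y)$, then amounts (by replacing $Y$ with $y(t)$ in the deterministic bounds, as in Proposition \ref{prop:cond:density}) to evaluating
\[
\int_{\mathbb{R}^{d_l}_{+}} \phi\bigl(u;\ O^{(2)}_{\sRl}(t, P_{\nu_{\cR}}^{\perp}y),\ \widetilde{\Omega}^{(2)}_{\cR,l}\bigr)\, du = \widetilde{\Lambda}^{(2)}_{\sRl}(t, P_{\nu_{\cR}}^{\perp}y),
\]
where the shift by $O^{(2)}_{\sRl}$ arises from the right-hand sides of the two constraints.

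Finally, using independence of randomizations across levels, the joint conditional probability of the full selection event over the sub-tree factorizes into $\prod_{l=1}^L \widetilde{\Lambda}^{(2)}_{\sRl}(t, P_{\nu_{\cR}}^{\perp}y)$. Multiplying by $\phi(t;\ \nu_{\cR}^\top \mu,\ \sigma^2\|\nu_{\cR}\|^2_2)$ and renormalizing gives the conditional density of $\nu_{\cR}^\top Y$ given $\{\bS = \bs\}$ and $\{P_{\nu_{\cR}}^{\perp}Y = P_{\nu_{\cR}}^{\perp}y\}$; a direct application of the PIT yields the claim that $P^{(2)}(\nu_{\cR}^\top Y, P_{\nu_{\cR}}^{\perp}Y;\ \nu_{\cR}^\top \mu)$ is $\text{Unif}(0,1)$ conditional on $\{\bS = \bs\}$. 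The main subtlety I anticipate is the dependence of $\text{GM}(Y; P_{\cR,l})$ on $\nu_{\cR}^\top Y$ (since $\cR \subseteq P_{\cR,l}$): this must be handled by substituting $y(t)$ for $Y$ when forming the mean vector $O^{(2)}_{\sRl}(t, P_{\nu_{\cR}}^{\perp}y)$, exactly as was done for the gain-function differences in Proposition \ref{prop:cond:density}. Once this substitution is made explicit, the argument is a routine extension of the fixed-depth proof, with the extra $\widetilde{W}_{\cR,l}$ contributing one additional independent Gaussian coordinate per level and thereby enlarging the orthant from $\mathbb{R}^{d_l-1}_{+}$ to $\mathbb{R}^{d_l}_{+}$.
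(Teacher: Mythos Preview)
Your proposal is correct and follows essentially the same route as the paper's proof: reduce to the sub-tree via the Lemma~\ref{L1} argument, rewrite each level's event as the split-selection constraints together with the randomized stopping constraint, identify the joint Gaussian law of $(\widetilde{W}_{\cR,l},\,W_{\cR,l}(\sRl)-W_{\cR,l}(s):s\neq\sRl)$ with the block-diagonal covariance $\widetilde{\Omega}^{(2)}_{\cR,l}$, substitute $y(t)$ to obtain the orthant integral $\widetilde{\Lambda}^{(2)}_{\sRl}$, factor across $l$, and apply the PIT. The paper packages the same computation by defining $Z^{(2)}_{\cR,l}=(\text{GM}(Y;P_{\cR,l})+\widetilde{W}_{\cR,l}-\lambda,\,D_{\cR,l})^{\top}$ and noting $Z^{(2)}_{\cR,l}\mid Y=y(t)\sim\mathcal{N}(O^{(2)}_{\sRl},\widetilde{\Omega}^{(2)}_{\cR,l})$ with $\{\SR=\sR\}=\{Z^{(2)}_{\cR,l}>0,\ l\in[L]\}$, which is exactly your argument stated more compactly.
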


\begin{namedproof}{Proofs of Theorem \ref{thm:pivot:3} and \ref{thm:pivot:4}}

Using the same strategy as adopted for the proof of 
Theorem \ref{pivot:1}, 
the conditional density of $\nu_{\cR}^\top Y \mid \cbracket{\bS = \bs, P_{\nu_{\cR}}^{\perp}Y = P_{\nu_{\cR}}^{\perp}y}$, when evaluated at $t\in \mathbb{R}$ is proportional to 
$$\phi \nbracket{t; \mu_{\cR}, \sigma^{2}||\nu_{\cR}||^2_2 } \times  \mathbb{P}\rbracket{\SR = \sR \mid P_{\nu_{\cR}}^{\perp}Y =   P_{\nu_{\cR}}^{\perp}y, \nu_{\cR}^\top Y  = t}.$$
Now it remains to compute 
$$
\mathbb{P}\rbracket{\SR = \sR \mid P_{\nu_{\cR}}^{\perp}Y =   P_{\nu_{\cR}}^{\perp}y, \nu_{\cR}^\top Y  = t}.
$$
To do so, define
$$
Z_{\cR,l}^{(1)}= \nbracket{G(Y; P_{\cR, l}, s^{*}_{\cR, l})+ W_{\cR, l}(s^*_{\cR,l})-\lambda, D_{\cR,l} }^{\top}; \; Z_{\cR, l}^{(2)}= \nbracket{\text{GM}(Y; P)  +\widetilde{W}_{\cR, l} -\lambda, D_{\cR,l} }^{\top}
$$
The proofs of Theorem \ref{thm:pivot:3} and \ref{thm:pivot:4} follow the same steps as before, as shown in Proposition \ref{prop:cond:density}, utilizing the fact that
\begin{align*}
\begin{gathered}
Z_{\cR,l}^{(1)} \; \lvert \; Y=y(t) \sim \mathcal{N}\nbracket{O^{(1)}_{\sRl}(t, P_{\nu_{\cR}}^{\perp}y), \widetilde{\Omega}_{\cR, l}}, \;\;
Z_{\cR,l}^{(2)} \; \lvert \; Y=y(t) \sim \mathcal{N}\nbracket{O^{(2)}_{\sRl}(t, P_{\nu_{\cR}}^{\perp}y), \widetilde{\Omega}_{\cR, l}}
\end{gathered}
\end{align*}
and  
$$\left\{\SR = \sR\right\} =\left\{Z_{\cR,l}^{(k)} >0 \ \text{ for } l\in [L]\right\}$$
by setting $k=1,2$ respectively for the two adaptive rules.  
\end{namedproof}

\section{Additional simulations}
\label{app:addsims}

\noindent \textbf{Results under varying Laplace noise scales}
In this experiment, we investigate the robustness of the proposed method, as well as the baseline methods, to misspecification of the noise distribution.
We set $D = \text{Laplace}(0, \sigma_L/\sqrt{2})$, for $\sigma_L \in \{1,2,5,10\}$, such that $\text{SD}(\epsilon_i) \in \{1, 2, 5, 10\}$. We set the dimension $p=10$. 

Following this, we compute selective inference using our proposed method and the other two baseline methods and report the results obtained from $500$ simulations.
The resulting coverage rates, average confidence interval lengths, and test MSE are presented in Figure \ref{fig:vary_signal_L}.
\begin{figure}[h!]
    \centering
    \includegraphics[width=\linewidth]{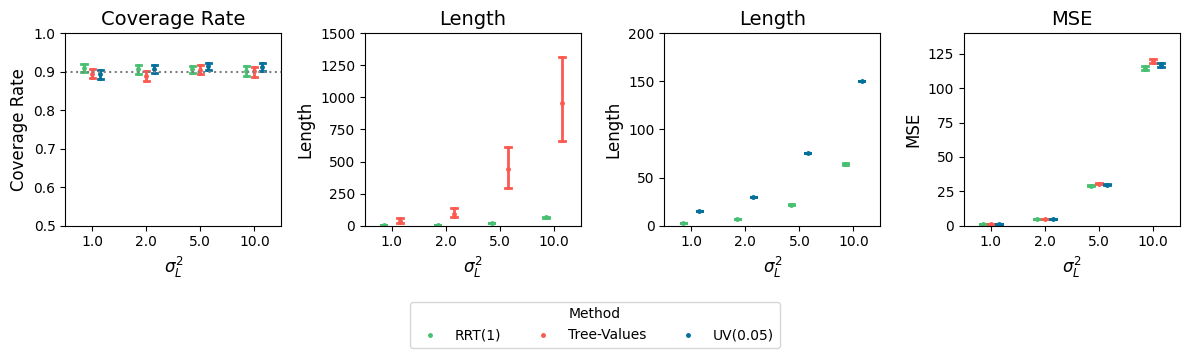}
    \caption{Coverage rate, average CI length, and prediction MSE of RRT(1), Tree-Values, and the UV method for $\sigma^2_L \in \{1,2,5,10\}$ and Laplace noise; {The dotted line is plotted at 0.9 in the coverage plot}}
    \label{fig:vary_signal_L}
\end{figure}

We note that similar trends are observed, even under a misspecified noise distribution: all three methods approximately achieve the targeted coverage rate of 90\%, and the proposed method produces confidence intervals that are shorter than Tree-values intervals by orders of magnitude, and shorter than or comparable to UV intervals in terms of lengths, with the proposed method having favorable test MSE performance compared to the two baseline methods. 

\section{Additional details for the PROMPT study}
\label{Appendix_PROMPT}
\begin{table}[H]
\scriptsize
\centering
\begin{tabular}{@{}cc@{}}
\toprule
 & Description                                                                                  \\ \midrule
ActivityCalories           & Calories burned from periods above sedentary level, personal average                         \\
BodyBmi                    & Body Mass Index, from the Body Time Series, personal average                                               \\
BodyWeight                 & Body weight, from the Body Time Series, personal average                                     \\
Calories                   & Calories, from the Activity Time Series, personal average                                    \\
CaloriesBMR                & Only BMR (Basal Metabolic Rate) calories, from the Activity Time Series, personal average    \\
Distance                   & Distance traveled, from the Activity Time Series, personal average                           \\
HeartRateIntradayCount     & The number of intraday heart rate samples collected during the time period, personal average \\
ActivityCaloriesSD         & Calories burned from periods above sedentary level, personal sd                              \\
CaloriesSD                 & Calories, from the Activity Time Series, personal average, personal sd                       \\
CaloriesBMR\_SD            & Only BMR (Basal Metabolic Rate) calories, from the Activity Time Series, personal sd         \\
DistanceSD                 & Distance traveled, from the Activity Time Series, personal average, personal sd              \\
HeartRateIntradayCountSD   & The number of intraday heart rate samples collected during the time period, personal sd      \\
ASSIST\_B                     & The Alcohol, Smoking and Substance Involvement Screening Test score at baseline (intake) survey                                                                                 \\
GAD\_B                        &  General Anxiety Disorder Survey score at baseline (intake) survey  \\
ISEL\_B                       & Interpersonal Support Evaluation List score at baseline (intake) survey \\
NEO\_B                        & NEO Personality Inventory score at baseline (intake) survey \\
PANSI\_B                      & Positive and Negative Suicide Ideation score at baseline (intake) survey \\
PCL\_B                        & PTSD Checklist score at baseline (intake) survey \\
PHQ\_B                        & Patient Health Questionnaire score at baseline (intake) survey \\
PSQI\_B                       & Pittsburgh Sleep Quality Index score at baseline (intake) survey \\
RFQ\_B                        & The Reflective Functioning Questionnaire score at baseline (intake) survey \\                
PHQ (response variable)                        & Patient Health Questionnaire score at 6-week survey \\ \bottomrule
\end{tabular}
\caption{Documentation for variables included in the PROMPT analysis}
\label{table:var_names}
\end{table}

\end{document}